\titleformat*{\section}{\large\bfseries}
\titleformat*{\subsection}{\it}
\titleformat*{\subsubsection}{\it}
\newtheorem{thm}{Theorem}
\newtheorem{lem}{Lemma}
\newtheorem{prp}{Proposition}
\newtheorem{algo}{Algorithm}
\def\L{{{\cal L}}}
\def\al{{\alpha}}
\def\ga{{\gamma}}
\def\de{{\delta}}
\def\la{{\lambda}}
\def\th{{\theta}}
\def\bde{{\text{\boldmath $\delta$}}}
\def\bsi{{\text{\boldmath $\sigma$}}}
\def\bom{{\text{\boldmath $\omega$}}}
\def\bxi{{\text{\boldmath $\xi$}}}
\def\alt{{\tilde \al}}
\def\Th{{\Theta}}
\def\De{{\Delta}}
\def\Si{{\Sigma}}
\def\Ga{{\Gamma}}
\def\Om{{\Omega}}
\def\La{{\Lambda}}
\def\bTh{{\text{\boldmath $\Th$}}}
\def\bDe{{\text{\boldmath $\De$}}}
\def\bSi{{\text{\boldmath $\Si$}}}
\def\bGa{{\text{\boldmath $\Ga$}}}
\def\bOm{{\text{\boldmath $\Om$}}}
\def\bLa{{\text{\boldmath $\La$}}}
\def\bPsi{{\text{\boldmath $\Psi$}}}
\def\bPhi{{\text{\boldmath $\Phi$}}}
\def\bSih{{\widehat \bSi}}
\def\bSit{{\widetilde \bSi}}
\def\bPsit{{\widetilde \bPsi}}
\def\0{{\text{\boldmath $0$}}}
\def\a{{\text{\boldmath $a$}}}
\def\b{{\text{\boldmath $b$}}}
\def\c{{\text{\boldmath $c$}}}
\def\d{{\text{\boldmath $d$}}}
\def\e{{\text{\boldmath $e$}}}
\def\n{{\text{\boldmath $n$}}}
\def\r{{\text{\boldmath $r$}}}
\def\A{{\text{\boldmath $A$}}}
\def\B{{\text{\boldmath $B$}}}
\def\C{{\text{\boldmath $C$}}}
\def\D{{\text{\boldmath $D$}}}
\def\E{{\text{\boldmath $E$}}}
\def\I{{\text{\boldmath $I$}}}
\def\M{{\text{\boldmath $M$}}}
\def\N{{\text{\boldmath $N$}}}
\def\O{{\text{\boldmath $O$}}}
\def\P{{\text{\boldmath $P$}}}
\def\Q{{\text{\boldmath $Q$}}}
\def\R{{\text{\boldmath $R$}}}
\def\S{{\text{\boldmath $S$}}}
\def\U{{\text{\boldmath $U$}}}
\def\V{{\text{\boldmath $V$}}}
\def\W{{\text{\boldmath $W$}}}
\def\X{{\text{\boldmath $X$}}}
\def\Y{{\text{\boldmath $Y$}}}
\def\bbh{{\hat \b}}
\def\dbt{{\widetilde \d}}
\def\Dbt{{\widetilde \D}}
\def\Bbt{{\widetilde \B}}
\def\tr{{\rm tr\,}}
\def\diag{{\rm diag\,}}
\def\etr{{\rm etr\,}}
\def\1r{{\rm (1)}}
\def\2r{{\rm (2)}}
\def\3r{{\rm (3)}}
\def\4r{{\rm (4)}}
\def\5r{{\rm (5)}}
\def\non{{\nonumber}}
\DeclareMathOperator*{\bdiag}{{\bf diag\,}}
\begin{document}
\title{Gibbs Sampler for Matrix Generalized Inverse Gaussian Distributions}
\author{
Yasuyuki Hamura\footnote{Corresponding author. 
Graduate School of Economics, Kyoto University, Yoshida-Honmachi, Sakyo-ku, Kyoto, 606-8501, JAPAN. 
\newline{
E-Mail: yasu.stat@gmail.com}}, \
Kaoru Irie\footnote{Faculty of Economics, The University of Tokyo. 
\newline{
E-Mail: irie@e.u-tokyo.ac.jp}}, \
and 
Shonosuke Sugasawa\footnote{Center for Spatial Information Science, The University of Tokyo. 
\newline{
E-Mail: sugasawa@csis.u-tokyo.ac.jp}} 
}
\maketitle
\begin{abstract}
Sampling from matrix generalized inverse Gaussian (MGIG) distributions is required in Markov Chain Monte Carlo (MCMC) algorithms for a variety of statistical models. However, an efficient sampling scheme for the MGIG distributions has not been fully developed. 
We here propose a novel blocked Gibbs sampler for the MGIG distributions, based on the Choleski decomposition.
We show that the full conditionals of the diagonal and unit lower-triangular entries are univariate generalized inverse Gaussian and multivariate normal distributions, respectively. 
Several variants of the Metropolis-Hastings algorithm can also be considered for this problem, but we mathematically prove that the average acceptance rates become extremely low in particular scenarios. 
We demonstrate the computational efficiency of the proposed Gibbs sampler through simulation studies and data analysis.

\par\vspace{4mm}
\noindent
{\it Key words and phrases:\ Matrix generalized inverse Gaussian distributions, Matrix skew-t distributions, Markov chain Monte Carlo, Partial Gaussian graphical models.} 
\end{abstract}

\section{Introduction}
\label{sec:introduction}

The Matrix generalized inverse Gaussian (MGIG) distribution is a probability distribution for a positive definite matrix, whose probability density function at $p\times p$ matrix $\bSi$ is proportional to $|\bSi|^{\lambda} \etr ( -(\bPsi\bSi + \bGa\bSi^{-1})/2 )$ with real $\lambda$ and positive definite $\bPsi$ and $\bGa$. As a multivariate extension of the generalized inverse Gaussian (GIG) distribution, the MGIG distribution frequently appears in many statistical models and computations, including Bayesian principal component analysis and partial Gaussian graphical models. However, to the best of our knowledge, no methodology for the direct simulation from the MGIG distribution has been known (see, for example, \citealt{obiang2022bayesian}, Remark~5.1), except for restricted cases where either $\bPsi$ or $\bGa$ has rank $1$ \citep{fang2020bayesian}. 
The Markov chain Monte Carlo (MCMC) methods for the MGIG distributions has not been fully investigated either; The only exception is the hit-and-run Metropolis Hastings (MH) method proposed in \cite{yx2017}. 
Although several methods for importance sampling have been proposed (\citealt{yoshii2013infinite,yang2013multi,fazayeli2016matrix}), they are not directly applicable to the full posterior inference for the MGIG distribution. 

The objective of our study is to propose a new MCMC sampler for the MGIG distribution, evaluate its efficiency and illustrate its computational performance in applications. Specifically, we find a Gibbs sampler available for the MGIG distribution and useful in posterior inference. In constructing the Gibbs sampler, we explicitly derive the conditional distributions of the components of the MGIG-distributed matrix, utilizing its Choleski decomposition, similarly to the Bartlett decomposition of the Wishart distribution. The resulting diagonal matrix and unit lower-triangular matrix are not independent, but their conditional distributions become the univariate GIG distributions and multivariate normal distributions, and a Gibbs sampler is naturally obtained as the iterative sampling from those distributions. Our proposed Gibbs sampler is efficient in terms of effective sample size, at a small cost of increased computational time, as demonstrated in the numerical study. 

One might think that the idea of importance sampling in the literature can be imported to the independent MH methods and can construct samplers that are easier and faster than the Gibbs sampler we propose. To clarify the advantage of the Gibbs sampler, we also study the possible independent MH methods, where we use the Wishart distribution to approximate the MGIG distribution as a proposal distribution, following the comments made in the Supplementary Materials of \cite{yoshii2013infinite}. 
As reported in the literature, this approximation is reasonably well in some cases, especially when order $\lambda$ is sufficiently large, while being simple and fast in the implementation of the MH algorithms. However, we found that for certain choices of parameters of the MGIG distribution, $(\lambda, \bPsi, \bGa)$, the Wishart proposal distribution suffers from poor accuracy of the approximation, resulting in an extremely low acceptance rate. We support this claim by analytically evaluating the average acceptance rate of the MH method, as well as comparing it with the Gibbs sampler in the numerical experiments. 

The rest of this paper is organized as follows. We review the basic property of the MGIG distributions and introduce three MH methods in Section~\ref{sec:mh}, discussing that the average acceptance rate of the MH method can be extremely low in particular cases. In Section~\ref{sec:gibbs}, we compute the conditional distributions of the MGIG distributed matrix, deriving the Gibbs sampler we recommend. In Section~\ref{sec:num}, we illustrate the MH and Gibbs samplers in simulation studies and real data analysis. Examples used in this section include the MCMC analysis of the MGIG distribution, the posterior inference for the partial graphical Gaussian models, and the development of matrix-variate skewed-$t$ distributions. 
R code implementing the proposed sampler is available at GitHub repository (\url{https://github.com/sshonosuke/MGIG}).

\bigskip 
\textbf{Notations:} Unless specified, all the matrices are $p\times p$ and in bold type. Let $\O $ and $\I $ be the zero matrix and the identity matrix, respectively. For $i = 1, \dots ,p $, let $\e _i$ denote the $p$-dimensional unit vector; the $i$-th entry is unity and the others are zeros. For matrix $\C$, its sub-matrix is written as $( \C )_{\underline{i}:{\overline{i}}, \underline{j}:{\overline{j}}} =(\C_{i,j})^{i=\underline{i},\dots, \overline{i}}_{j=\underline{j},\dots, \overline{j}}$ for $1 \le \underline{i} \le \overline{i} \le p$ and $1 \le \underline{j} \le \overline{j} \le p$. For $\A = \bdiag ( a_1 , \dots , a_m )$ with positive diagonals, we write $\A ^{1 / 2} = \bdiag ( \sqrt{a_1} , \dots , \sqrt{a_p} )$ and $\A ^{- 1 / 2} = \bdiag ( 1 / \sqrt{a_1} , \dots , 1 / \sqrt{a_p} )$. 

\section{Failure of Metropolis-Hastings methods} 
\label{sec:mh}

\subsection{MGIG and Wishart distributions}

The matrix generalized inverse Gaussian distributions, denoted by ${\rm{MGIG}}_p ( \la , \bPsi , \bGa )$ with real valued $\lambda$ and positive definite matrices $\bPsi$ and $\bGa$, have the following density function \citep{barndorff1982exponential}:
\begin{equation*}
    {\rm{MGIG}}_p (\bSi | \lambda, \bPsi, \bGa) = c_p(\lambda ,\bPsi,\bGa )^{-1} | \bSi |^{\la } \exp \{ - \tr ( \bPsi \bSi + \bGa \bSi ^{- 1} ) / 2 \},
\end{equation*}
where the normalizing constant is explicitly given as 
\begin{equation*}
    c_p(\lambda ,\bPsi,\bGa ) = 2^{-\lambda _0 p} | \bGa | ^{\lambda _0} \mathcal{B}_{\lambda _0} (\bPsi \bGa / 4) , \ \ \ \ \ \mathrm{where} \ \lambda = \lambda _0 - \frac{p+1}{2}
\end{equation*}
and $\mathcal{B}_{\lambda _0} (\cdot)$ is the matrix-augment modified Bessel function of the second kind \citep{herz1955bessel}. If $\bSi \sim {\rm{MGIG}}_p ( \la , \bPsi , \bGa )$, then $\bSi ^{-1}\sim {\rm{MGIG}}_p ( -\la -(p+1), \bGa , \bPsi )$, so we assume $\lambda > -(p+1)/2$ without loss of generality. Also, the MGIG distributions with rank-deficient matrix parameters are well-defined. Specifically, the following cases are allowed: 
\begin{itemize}
    \item $\bPsi$ is positive definite, $\bGa$ is non-negative definite, and $\lambda > -1/2$, or %
    \item $\bPsi$ is non-negative definite, $\bGa$ is positive definite, and $\lambda < -p$. %
\end{itemize}
See, for example, \cite{butler1998generalized}. When either of the matrix parameters is rank-deficient, one can utilize the Matsumoto-Yor property and reduce the problem to the sampling from the MGIG distribution with lower-dimensional but full-rank matrix parameters. For this reason, we can also assume that both $\bPsi$ and $\bGa$ are positive definite. For details, see Appendix~\ref{app:deg}. Finally, re-scaled $\bSi$ also follows the MGIG distributions: if $\bSi ^{\ast} = \C\bSi \C^{\top}$ for some full-rank matrix $\C$, then $\bSi^{\ast} \sim {\rm{MGIG}}_p ( \la ,  (\C^{\top})^{-1} \bPsi \C^{-1}, \C\bGa \C^{\top} )$. For this reason, we set $\bGa = \I$ and $\bPsi$ to be diagonal in our simulation studies in Section~\ref{sec:sim}, but our method is developed for any positive definite $\bPsi$ and $\bGa$. 

To the best of our knowledge, no methodology for the direct simulation from the MGIG distribution has been known (see, for example, \citealt{obiang2022bayesian}, Remark~5.1), except for restricted cases where either $\bPsi$ or $\bGa$ has rank $1$ \citep{fang2020bayesian}. The development of the direct simulation from the general MGIG distribution is hindered mainly by the matrix Bessel function in the normalizing constant, which is hard to evaluate analytically or numerically. A Laplace approximation of the matrix Bessel function has been proposed \citep{butler2003laplace} and utilized in importance sampling \citep{yoshii2013infinite}, but its accuracy is not always satisfactory \citep{yang2013multi}. In the Bayesian principal component analysis, where the MGIG distribution arises in posterior inference, several methods of importance sampling have been proposed (\citealt{yoshii2013infinite,yang2013multi,fazayeli2016matrix}), being focused on the computation of the posterior expectation of $\bSi$ (and $\bSi^{-1}$) only. However, the proposal distribution of those importance sampling methods can also be used in the MCMC methods, as we will see below. 

One of the proposal distributions we consider is the Wishart distribution. For degree-of-freedom $\nu > p-1$ and positive definite matrix $\P$, the Wishart distribution, ${\rm{W}}_p ( \nu, \P )$, has the density evaluated at positive definite $\bSi$ as,
\begin{equation*}
    {\rm{W}}_p (\bSi | \nu, \P ) = \frac{1}{  2^{\frac{k\nu}{2}} |\P|^{\frac{\nu}{2}} \Ga_p(\frac{\nu}{2}) } | \bSi |^{ \frac{\nu}{2} - \frac{p+1}{2} } \exp \{ - \tr ( \bSi \P^{-1} ) / 2 \},
\end{equation*}
where $\Ga_p(\cdot)$ is the multivariate gamma function. 

\subsection{Metropolis-Hastings methods}

We consider the Markov chain Monte Carlo methods, targeting ${\rm{MGIG}}_p ( \la , \bPsi , \bGa )$ as the stationary distribution. Among them, the Metropolis-Hastings (MH) method is useful particularly in avoiding the evaluation of the normalizing constant of the MGIG distribution. 
The Markov kernel of transitioning $\bSi_{\rm{old}}$ to $\bSi$ of the MH method is defined by the algorithm below: for some proposal distribution $q(\cdot | \bSi_{\rm{old}})$, 
\begin{itemize}
    \item Given $\bSi_{\mathrm{old}}$, generate $\bSi _{\mathrm{new}} \sim q( \bSi _{\mathrm{new}} | \bSi_{\mathrm{old}} )$.
    
    \item Set $\bSi = \bSi _{\mathrm{new}}$ with probability 
    \begin{equation*}
        \min \left\{ 1, \frac{ {\rm{MGIG}}_p ( \bSi _{\mathrm{new}} | \la , \bPsi , \bGa ) q( \bSi _{\mathrm{old}} | \bSi_{\mathrm{new}} ) }{ {\rm{MGIG}}_p ( \bSi _{\mathrm{old}} | \la , \bPsi , \bGa ) q( \bSi _{\mathrm{new}} | \bSi_{\mathrm{old}} ) }   \right\} .
    \end{equation*}
    Otherwise, set $\bSi = \bSi _{\mathrm{old}}$. 
\end{itemize}
Note that the normalizing constant of the MGIG distribution, that involves the matrix Bessel function and is difficult to evaluate numerically, is canceled out in the acceptance rate above. 
To implement the MH method, it is necessary to construct the proposal distribution, $q( \bSi | \bSi_{\rm{old}} )$, from which it is easy to simulate. 

\subsubsection*{Independent MH method (MH1)} 

We consider a Wishart distribution whose density resembles the MGIG density as the proposal distribution of the MH method. 
This approach is classified as the independent MH method: $q(\bSi | \bSi^{\mathrm{old}}) = q(\bSi)$. Hence, the efficiency of the MCMC algorithm depends on how accurate the proposal, $q( \bSi )$, approximates the target, ${\rm{MGIG}}_p ( \bsi | \la , \bPsi , \bGa )$. By ignoring $\exp \{ - \tr (\bGa \bSi^{-1} ) /2 \}$ in the target MGIG density, \cite{yoshii2013infinite} and \cite{yang2013multi} read off the following Wishart proposal distribution: 
\begin{equation*}
    q(\bSi) = {\rm{W}}_p ( \bSi | 2 \lambda + (p+1) , \bPsi ^{-1}). 
\end{equation*}
Note that the degree-of-freedom of the Wishart distribution for a positive definite random matrix must be larger than $p-1$, so we must additionally assume $\lambda > -1$. By using this proposal distribution, the acceptance rate is, 
    \begin{equation} \label{eq:accept}
       \min \left[ 1, \exp \left\{ - \tr  \bGa ( \bSi_{\rm{new}} ^{-1} - \bSi_{\rm{old}} ^{-1} ) / 2 \right\} \right].
    \end{equation}
The effect of matrix parameter $\bGa$ on the computational efficiency of the MH method is clearly seen in the functional form of this acceptance rate. For example, if the scale of $\bGa$ increases, then it might inflate the difference between $\bSi _{\rm{new}}^{-1}$ and $\bSi _{\rm{old}}^{-1}$, leading to an extremely small acceptance rate. The other matrix parameter, $\bPsi$, does not 
appear in (\ref{eq:accept}), but in the proposal distribution, ${\rm{W}}_p ( 2 \lambda + (p+1) , \bPsi ^{-1})$. For $\bPsi$ with large eigenvalues, we expect that $\bSi _{\rm{new}}$ with small eigenvalues is generated, making $\exp \{ -\tr ( \bGa\bSi _{\rm{new}}^{-1} )/2 \}$ extremely small. %
We will investigate this acceptance rate further in Section~\ref{sec:theory}.

\subsubsection*{Mode-adjusted independent MH method (MH2)}

The log-density of the MGIG distribution is analytically tractable. The first order condition that defines the mode of the MGIG distribution is 
\begin{equation} \label{eq:riccati}
    2\lambda \bSi - \bSi \bPsi \bSi + \bGa = \O .
\end{equation}
\cite{fazayeli2016matrix} used a Wishart distribution as the proposal distribution, but proposed to adjust its mode to that of the MGIG distribution. Let $\bLa _0$ be the solution of equation~(\ref{eq:riccati}). Then, the proposal distribution is 
\begin{equation*}
q(\bSi) = {\rm{W}}_p ( \bSi | \rho_0 , \Lambda_0/(\rho_0-p-1) ),
\end{equation*}
where $\rho_0\ge p+1$ is a tuning parameter. %
Equation~(\ref{eq:riccati}) is an algebraic Riccati equation, and its unique solution, $\bLa_0$, can be numerically computed. In implementing this method, we utilize the CARE solver (the R-package \texttt{icare}) as practiced in the literature. The acceptance rate of this algorithm is easily computed as well.

\subsubsection*{Hit-and-run MH method (HR).} 

\cite{yx2017} apply the hit-and-run algorithm, which is originally proposed in \cite{yb1994}, to the case of the MGIG distribution. In constructing proposal distribution $q(\bSi | \bSi _{\rm{old}})$, this approach uses the additive noise to the ``log-scaled'' $\bSi_{\rm{old}}$, while restricting the newly generated $\bSi _{\rm{new}}$ to be positive definite. 

To detail the algorithm, let $\exp ( \A ) = \sum_{k = 0}^{\infty } \A ^k / (k !)$. 
For any positive definite matrix $\A$, let $\log ( \A )$ be the unique symmetric matrix such that $\exp \{ \log ( \A ) \} = \A $. 
Then, the HR algorithm is summarized as follows. Given a current value $\bSi _{\rm{old}}$, 
\begin{itemize}
\item Sample $l_{i, j}$ ($1 \le i \le j \le p$) and $v$ from ${\rm{N}} (0, 1)$ independently.

\item Set $\L$ to the symmetric matrix whose $(i,j)$-entry equals $l_{i,j}$ ($i\le j)$. 

\item Set $\bSi _{\rm{new}} = \exp \{ \log ( \bSi _{\rm{old}} ) + \V \} $, where $\V = v \L / \sqrt{\sum_{i = 1}^{p} \sum_{j = i}^{p} {l_{i, j}}^2}$. 

\item
Set $\bSi = \bSi _{\rm{new}}$ with probability 
\begin{align}
\min \left[ 1,\ \frac{ {\rm{MGIG}}_p ( \bSi _{\mathrm{new}} | \la , \bPsi , \bGa ) }{ {\rm{MGIG}}_p ( \bSi _{\mathrm{old}} | \la , \bPsi , \bGa ) }  \prod_{1 \le i < j \le p} \frac{( d_{i}^{*} - d_{j}^{*} ) ( \log d_i - \log d_j ) }{ ( \log d_{i}^{*} - \log d_{j}^{*} ) ( d_i - d_j )} \right] \text{,} \non %
\end{align}
where $d_{1}^{*} \ge \dots \ge d_{p}^{*}$ and $d_1 \ge \dots \ge d_p$ are the characteristic roots of $\bSi _{\rm{new}}$ and $\bSi _{\rm{old}}$, respectively. Otherwise, set $\bSi = \bSi _{\rm{old}}$.
\end{itemize}
\cite{yx2017} report that this MH method works reasonably well in their application, where the size of $\bSi$ is at most $p=49$ and order parameter $\lambda$ is sufficiently large. We will evaluate its empirical computational efficiency for smaller $\lambda$ in our simulation study in Section~\ref{sec:sim}. %

\subsection{Efficiency of the MH methods}
\label{sec:theory}

The efficiency of the independent MH method depends on the accuracy of the approximation of the original MGIG distribution by the Wishart distribution or, equivalently, the acceptance rate. In this subsection, we study the acceptance rate of MH1 in (\ref{eq:accept}). 

Although it is difficult to obtain the clear, interpretable bounds of the acceptance rate, we can still gain some insights on the efficiency of the MH method from simple examples by computing the average acceptance rate (AAR, \citealt[Section~7.6.1]{robert1999monte}), 
\begin{equation*}
    \mathrm{AAR}(\lambda, \bPsi,\bGa) = \mathbb{E}[ \exp \{ -\tr \bGa ( \bSi _{\rm{new}}^{-1} - \bSi _{\rm{old}}^{-1} ) /2 \}  ] = 2\mathbb{P}[ \tr \bGa \bSi_{\rm{old}}^{-1} \le \tr \bGa \bSi_{\rm{new}}^{-1} ], 
\end{equation*}
where $\bSi _{ \rm{new} }\sim {\rm{W}}_p ( 2 \lambda + p + 1 , \bPsi ^{-1})$ and $\bSi _{ \rm{old} }\sim {\rm{MGIG}}_p ( \lambda , \bPsi , \bGa)$. The expression above also implies that re-scaling of the matrix of interest does little to the improvement of the sampling efficiency. To be precise, for some full-rank $p{\times}p$ matrix $\C$, using $(\C\bSi_{\rm{old}}\C^{\top} , \C\bSi_{\rm{new}}\C^{\top})$ instead of $(\bSi_{\rm{old}} , \bSi_{\rm{new}})$ does not change the AAR. 

We consider two examples of the MGIG distributions and evaluate the limit of the AAR. The proofs of the statements below are given in the Supplementary Materials (Section~S3). 

\ 

\textbf{Example 1.}~(Large and small $\lambda$) The previous studies on the importance sampling and HR methods evaluate the computational performance of those methods for sufficiently large $\lambda$. For example, the $\lambda$ is at least $10$ in the examples of \cite{fazayeli2016matrix}. To investigate the effect of $\lambda$ on the AAR, %
first, we prove that $\mathrm{AAR}( \lambda , \bPsi , \bGa ) \to 1$ as $\lambda \to \infty$. This result supports the empirical findings in the literature. In contrast, when $\lambda \to -1$, we have $\mathrm{AAR}( \lambda , \bPsi , \bGa  ) \to 0$. That is, the smaller the $\lambda$ is, the harder the MH method accepts the newly generated value and the less efficient the sampler becomes. We will compute the effective sample size of the Gibbs and MH methods in simulation studies where $\lambda = 2$, a small value relative to those considered in the existing studies. 

\ 

\textbf{Example 2.}~(Large $\bPsi$) Suppose that $\bGa = \I $ and $\bPsi = \diag (\psi , 1, \dots , 1)$ for some large $\psi > 0$ and arbitrary $\lambda > -1$. %
Then, as $\psi \to \infty$, we have $\mathrm{AAP}( \lambda , \bPsi, \I ) \to 0$. This example implies the possible failure of the MH methods, where generated $\bSi_{\rm{new}}$ are hardly ever accepted, even in low-dimensional cases. In Section~\ref{sec:sim}, we evaluate the efficiency of the MH methods in similar scenarios, where $\bPsi$ has several large diagonals.

%

\section{Block Gibbs Sampler for MGIG Distribution}
\label{sec:gibbs} 

We propose a Gibbs sampler for the MGIG distribution by computing the full conditional distributions of the matrix entries. Specifically, we consider the Choleski decomposition of the positive definite matrix $\bSi$, deriving the conditional distributions of its diagonal distribution and unit lower-triangular matrix. This approach resembles the Bartlett decomposition of the Wishart distributed matrix, where all the entries of the decomposed matrices become mutually independent, the entries of the diagonal distribution follow the chi-squared distributions, and those of the unit lower-triangular entries follow the standard normal distribution. In contrast, in the case of the MGIG distribution, the entries of the decomposed distributions are not independent. Instead, we observe that the conditional distributions turn out to be the independent univariate GIG distributions and multivariate normal distributions, respectively. This observation directly leads to a Gibbs sampler we propose. 

Let $\bSi \sim {\rm{MGIG}}_p ( \lambda , \bPsi , \bGa)$. First, we consider a decomposition, $\bSi = \B \A \B ^{\top }$, where $\A={\rm diag}(a_1,\ldots,a_p)$  and 
\begin{align}
\B = \begin{pmatrix} 1 & 0 & \cdots & 0 & 0 \\ b_{2, 1} & 1 & \cdots & 0 & 0 \\ \vdots & \vdots & \ddots & \vdots & \vdots \\ b_{p - 1, 1} & b_{p - 1, 2} & \cdots & 1 & 0 \\ b_{p, 1} & b_{p, 2} & \cdots & b_{p, p - 1} & 1 \end{pmatrix} \text{,} \non 
\end{align}
for $( \a , \b ) = (( a_i )_{i = 1}^{p} , (( b_{i, j} )_{j = 1}^{i - 1} )_{i = 2}^{p} )$ being the unique point in $(0, \infty )^p \times \mathbb{R} ^{p (p - 1) / 2}$.
Here $\A$ is a diagonal matrix and $\B$ is a lower-triangular matrix, so that the decomposition $\bSi = \B \A \B ^{\top }$ is the Cholesky decomposition. It is immediate from the change-of-variables that the joint density of $\a $ and $\b $ is 
\begin{align}
p( \a , \b ) &\propto \Big( \prod_{i = 1}^{p} {a_i}^{\la + p - i} \Big) \exp [- \tr \{ \A ^{1 / 2} \B ^{\top } \bPsi \B \A ^{1 / 2} + \A ^{- 1 / 2} \B ^{- 1} \bGa ( \B ^{- 1} )^{\top } \A ^{- 1 / 2} \} / 2] \text{.} \non 
\end{align}
The conditional distribution of $\a $ given $\b $ can be easily read-off as 
\begin{align}
p( \a | \b ) &= \prod_{i = 1}^{p} {\rm{GIG}} ( a_i | \la + p - i + 1, ( \B ^{\top } \bPsi \B )_{i, i} , ( \B ^{- 1} \bGa ( \B ^{- 1} )^{\top } )_{i, i} ) \text{,} \non 
\end{align}
or the independent GIG distributions. Likewise, as the function of each lower-triangular entry of $\B$, the joint density is the exponentiated quadratic function, so the full conditional of each entry of $\B$ is a normal distribution. Furthermore, to derive a more efficient sampler, we work on the $i$-th column vector of $\B$ for $i=1,\dots ,p-1$, namely, 
\begin{align}
\begin{matrix} {} \\ {} \\ {} \\ i \to \\ {} \end{matrix} 
\begin{pmatrix} 0 \\ \vdots \\ 0 \\ 1 \\ \b _i \end{pmatrix} \hspace{- 0.5cm} \begin{array}{c}
\left. 
\begin{array}{c}
{} \\ {} \\ {}
\end{array}
\right\} i - 1 
\\
\left. 
\begin{array}{c}
{} 
\end{array}
\right. {} 
\\
\left. 
\begin{array}{c}
{} 
\end{array}
\right\} p - i 
\end{array} \begin{matrix} {} \\ {} \\ {} \\ {} \\ \text{,} \end{matrix} \non 
\end{align}
where $\b _i = ( b_{h, i} )_{h = i + 1}^{p}$ for $i = 1, \dots , p - 1$ when $p \ge 2$. 
Given $\a$ and $\b_{-i} = \b \setminus \b _i$, the conditional distribution of the $(p-i)$-dimensional vector $\b_i$ is, in fact, a multivariate normal distribution, whose mean and variance can be computed recursively as we move from $i=1$ to $i=p-1$. The observations we had so far are summarized as follows:

\begin{thm}\label{thm:cond}
The full conditional distribution of $\a $ is the product of $p$ independent generalized inverse Gaussian distributions. 
For all $i = 1, \dots , p - 1$, the full conditional distribution of $\b _i$ is a $(p - i)$-dimensional multivariate normal distribution. 
\end{thm}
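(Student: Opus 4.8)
The plan is to work directly from the joint density $p(\a,\b)$ displayed above, treating each assertion as a statement about how the exponent depends on the relevant block of variables.

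For the first assertion, I would simply read off the conditional. Writing $\tr(\A^{1/2}\B^\top\bPsi\B\A^{1/2}) = \tr(\A\,\B^\top\bPsi\B)$ and $\tr(\A^{-1/2}\B^{-1}\bGa(\B^{-1})^\top\A^{-1/2}) = \tr(\A^{-1}\B^{-1}\bGa(\B^{-1})^\top)$ by cyclicity and diagonality of $\A$, the only entries that survive in the traces contribute $\sum_i a_i(\B^\top\bPsi\B)_{i,i}$ and $\sum_i a_i^{-1}(\B^{-1}\bGa(\B^{-1})^\top)_{i,i}$. Hence, as a function of $\a$ with $\b$ fixed, $p(\a,\b)$ factorizes across $i$ into terms proportional to $a_i^{\lambda+p-i}\exp[-\{(\B^\top\bPsi\B)_{i,i}a_i + (\B^{-1}\bGa(\B^{-1})^\top)_{i,i}a_i^{-1}\}/2]$, each of which is a GIG kernel with index $\lambda+p-i+1$. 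This gives the stated product form, and conditional independence across $i$ is immediate.

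For the second assertion, fix $i$ and condition on $\a$ and $\b_{-i}$. It suffices to show that the exponent is a quadratic function of $\b_i$, since then the conditional density is proportional to the exponential of a quadratic and is therefore Gaussian. The term $\tr(\A\,\B^\top\bPsi\B)$ is manifestly quadratic in the entries of $\B$, and $\B$ depends affinely on $\b_i$, so this term is quadratic in $\b_i$. The difficulty is the term involving $\B^{-1}$: I claim that $\B^{-1}$ is an affine function of $\b_i$, which would make $\B^{-1}\bGa(\B^{-1})^\top$, and hence its weighted trace, quadratic in $\b_i$. To prove this, let $\bb_i\in\Re^p$ be the zero-padded column, $(\bb_i)_h=b_{h,i}$ for $h>i$ and $0$ otherwise, and let $\B_0$ be $\B$ with the subdiagonal of column $i$ set to zero (a fixed matrix given $\b_{-i}$). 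Then $\B=\B_0+\bb_i\e_i^\top$ is a rank-one update, and Sherman--Morrison gives $\B^{-1}=\B_0^{-1}-\B_0^{-1}\bb_i\e_i^\top\B_0^{-1}/(1+\e_i^\top\B_0^{-1}\bb_i)$. The key observation, and the main obstacle, is that the scalar $\e_i^\top\B_0^{-1}\bb_i$ vanishes: $\B_0$ is unit lower triangular with $i$-th column equal to $\e_i$, so $\B_0^{-1}$ is unit lower triangular and its $i$-th row is supported on indices $\le i$, whereas $\bb_i$ is supported on indices $>i$. Consequently the denominator equals $1$ and $\B^{-1}=\B_0^{-1}-\B_0^{-1}\bb_i\e_i^\top\B_0^{-1}$ is exactly affine (linear plus constant) in $\bb_i$, hence in $\b_i$.

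Combining the two terms, the exponent is a quadratic form in $\b_i$, and its second-order coefficient is negative definite because $\bPsi$ and $\bGa$ are positive definite; therefore the conditional law of $\b_i$ is a nondegenerate $(p-i)$-dimensional normal distribution. Its mean and covariance follow by completing the square, which can be organized recursively in $i$ since $\B_0^{-1}$ is cheap to update as $i$ increases; this recursion is what makes the sampler efficient but is not needed for the statement itself.
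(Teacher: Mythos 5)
Your proof is correct, and its first half coincides with the paper's: both read the factorized GIG kernels (with index $\la + p - i + 1$) directly off the joint density. For the second half, you and the paper rest on the same key fact---that $\B^{-1}$, like $\B$, is an \emph{affine} function of $\b_i$ once $\a$ and $\b_{-i}$ are fixed---but you reach it by a different device. The paper factorizes $\B = \B_1 \dotsm \B_p$ into elementary unit lower-triangular matrices, each carrying a single column, and uses the sign-flip identity $\B_i^{-1} = \overline{\B}_i = 2\I - \B_i$, so that $\B^{-1} = \overline{\B}_p \dotsm \overline{\B}_1$ is manifestly affine in $\b_i$; you instead write $\B$ as a rank-one update of the matrix $\B_0$ obtained by zeroing the subdiagonal of column $i$ and apply Sherman--Morrison, observing that the denominator equals $1$ because the $i$-th row of $\B_0^{-1}$ and the zero-padded copy of $\b_i$ have disjoint supports. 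These are two faces of the same structural fact (your Sherman--Morrison step applied to the single factor $\I$ plus the padded $\b_i$ times $\e_i^\top$ \emph{is} the paper's identity $\B_i^{-1} = 2\I - \B_i$), so the mathematical content is equivalent and your packaging is arguably more self-contained. The substantive difference is what happens afterwards: you stop at the abstract conclusion that the exponent is a quadratic in $\b_i$ with positive definite second-order part, hence the conditional is Gaussian, deferring mean and covariance to ``completing the square,'' whereas the paper carries out that completion explicitly, which is precisely what yields the closed-form precision $\N_i = a_i(\bPsi)_{(i+1):p,\,(i+1):p} + (\overline{\M}_i)_{i,i}\,((\B^{-1})^\top\A^{-1}\B^{-1})_{(i+1):p,\,(i+1):p}$ and mean vector used in Algorithm~1; your argument proves the theorem as stated but would need that additional bookkeeping to drive the sampler. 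One small point worth tightening: you attribute nondegeneracy to positive definiteness of both $\bPsi$ and $\bGa$, but the clean statement is that the $\bPsi$-trace term alone contributes the positive definite block $a_i(\bPsi)_{(i+1):p,\,(i+1):p}$ while the $\bGa$-term contributes a positive semi-definite matrix, so only $\bPsi$ positive definite and $a_i>0$ are needed for the conditional covariance to exist.
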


The detailed proof is given in the Supplementary Material (Section~S1 and S2).
Based on the results of conditional distributions in Theorem~\ref{thm:cond}, we can develop the following Gibbs sampler to generate the MCMC samples of $\bSi$.

\begin{algo}[Block Gibbs sampler for MGIG distribution]
\label{algo:MGIG}
Assume that $p \ge 2$. 
Then the variables $\a $ and $\b _1 , \dots , \b _{p - 1}$ are updated in the following way: 
\begin{enumerate}
\item
Compute $\B ^{- 1}$, $\B ^{\top } \bPsi \B $, and $\B ^{- 1} \bGa ( \B ^{- 1} )^{\top }$. 
\item
Sample $\a ^{*} = ( a_{i}^{*} )_{i = 1}^{p} \sim \prod_{i = 1}^{p} {\rm{GIG}} ( \la + p - i + 1, ( \B ^{\top } \bPsi \B )_{i, i} , ( \B ^{- 1} \bGa ( \B ^{- 1} )^{\top } )_{i, i} )$ and let $\A ^{*} = \bdiag ( \a ^{*} )$. 
\item
For $i = 1, \dots , p$, let $\bbh _i = ( \overbrace{0, \dots , 0}^{i - 1} , 1, {\b _i}^{\top } )^{\top } \in \mathbb{R} ^p$, 
\begin{align}
&\B _i = \begin{pmatrix} \e _1 & \cdots & \e _{i - 1} & \bbh _i & \e _{i + 1} & \cdots & \e _p \end{pmatrix} \text{,} \quad \text{and} \quad \overline{\B } _i = 2 \I - \B _i \text{.} \non 
\end{align}
\item
Compute $\Q ^{*} = ( \B ^{- 1} )^{\top } ( \A ^{*} )^{- 1} \B ^{- 1}$. 
\item
For $i = 1$, 
\begin{itemize}
\item
Let $\M _{1}^{*} = \bPsi $, $\R _{1}^{*} = \overline{\B } _1 \B ( \A ^{*} )^{1 / 2}$, $\overline{\M } _{1}^{*} = \bGa $, and $\overline{\R } _{1}^{*} = {\B _1}^{\top } ( \B ^{- 1} )^{\top } ( \A ^{*} )^{- 1 / 2}$. 
\item
Let $\N _{1}^{*} = a_{1}^{*} ( \bPsi )_{2:p, 2:p} + ( \overline{\M } _{1}^{*} )_{1, 1} ( \Q ^{*} )_{2:p, 2:p}$. 
\item
Sample $\b _{1}^{*} \sim {\rm{N}}_{p - 1} (( \N _{1}^{*} )^{- 1} \n _{1}^{*} , ( \N _{1}^{*} )^{- 1} )$, where 
\begin{align}
\n _{1}^{*} = - ( \M _{1}^{*} )_{2:p, 1:p} \R _{1}^{*} (( \R _{1}^{*} )_{1, 1:p} )^{\top } + ( \overline{\R } _{1}^{*} )_{2:p, 1:p} ( \overline{\R } _{1}^{*} )^{\top } ( {( \overline{\M } _{1}^{*} )_{1, 1:p}} )^{\top } \text{,} \non 
\end{align}
and let $\bbh _{1}^{*} = (1, ( \b _{1}^{*} )^{\top } )^{\top }$, 
\begin{align}
\B _{1}^{*} = \begin{pmatrix} \bbh _{1}^{*} & \e _2 & \cdots & \e _p \end{pmatrix} \text{,} \quad \text{and} \quad \overline{\B } _{1}^{*} = 2 \I - \B _{1}^{*} \text{.} \non 
\end{align}
\end{itemize}
\item
If $p \ge 3$, then for $i = 2, \dots , p - 1$, 
\begin{itemize}
\item
Let $\M _{i}^{*} = ( \B _{i - 1}^{*} )^{\top } \M _{i - 1}^{*} \B _{i - 1}^{*}$, $\R _{i}^{*} = \overline{\B } _i \R _{i - 1}^{*}$, $\overline{\M } _i = \overline{\B } _{i - 1}^{*} \overline{\M } _{i - 1}^{*} ( \overline{\B } _{i - 1}^{*} )^{\top }$, and $\overline{\R } _{i} = {\B _i}^{\top } \overline{\R } _{i - 1}^{*}$. 
\item
Let $\N _{i}^{*} = a_{i}^{*} ( \bPsi ) _{(i + 1):p, (i + 1):p} + ( \overline{\M } _{i}^{*} )_{i, i} ( \Q ^{*} )_{(i + 1):p, (i + 1):p}$. 
\item
Sample $\b _{i}^{*} \sim {\rm{N}}_{p - i} (( \N _{i}^{*} )^{- 1} \n _{i}^{*} , ( \N _{i}^{*} )^{- 1} )$, where 
\begin{align}
\n _{i}^{*} &= - ( \M _{i}^{*} )_{(i + 1):p, 1:p} \R _{i}^{*} (( \R _{i}^{*} )_{i, 1:p} )^{\top } + ( \overline{\R } _{i}^{*} )_{(i + 1):p, 1:p} ( \overline{\R } _{i}^{*} )^{\top } ( {( \overline{\M } _{i}^{*} )_{i, 1:p}} )^{\top } \text{,} \non 
\end{align}
and let $\bbh _{i}^{*} = (0, \dots , 0, 1, ( \b _{i}^{*} )^{\top } )^{\top } \in \mathbb{R} ^p$, 
\begin{align}
&\B _{i}^{*} = \begin{pmatrix} \e _{1} & \cdots & \e _{i - 1} & \bbh _{i}^{*} & \e _{i + 1} & \cdots & \e _{p} \end{pmatrix} \text{,} \quad \text{and} \quad \overline{\B } _{i}^{*} = 2 \I - \B _{i}^{*} \text{.} \non 
\end{align}
\end{itemize}
\end{enumerate}
\end{algo}

\bigskip 
In sampling $\b_i$, we need to compute $\n_i^{*}$ and $\N_i^{*}$. In doing so, we have to update not all but some parts of $(\M_{i}^{*}, \overline{\M } _{i}^{*}, \R_{i}^{*}, \overline{\R } _{i}^{*})$. Such an update can be done fast, for it only requires the multiplication of $\B_i$ and $\overline{\B}_i$ to the existing $(\M _{i-1}^{*},\overline{\M } _{i-1}^{*}, \R_{i-1}^{*}, \overline{\R } _{i-1}^{*})$, which is not as costly as $O(p^3)$ and does not hinder the implementation of the algorithm. Note also that %
some of the necessary matrices, including the submatrix of $\Q^{*}$, depend only on $(\b_{i+1},\dots, \b_{p-1})$, but not on $(\b_1,\dots ,\b_{i-1})$. Thus, we do not have to update those matrices, such as $\Q^{*}$, as we sample each of $\b_i$'s, but need to compute them once before starting to sample $\b$. 

This algorithm involves multiple matrix decomposition and multiplication, so is clearly more computationally costly than the MH methods. One of the bottlenecks is the necessity of decomposing $i{\times}i$ matrix $\N_i^{*}$ for $i=1,\dots , p$ in every scan of the algorithm. Hence, in the case of extremely high-dimensional applications, the proposed algorithm might need more sophistication to be computationally feasible. Here we would like to point out that the decomposition of $p$ matrices, $\N_1^{*},\dots , \N_p^{*}$, can be parallelized; see the Supplementary Materials (Section~S4). In our numerical examples of Section~\ref{sec:num}, where the dimension is at most $p=100$, we do not need such an acceleration of the algorithm.

\section{Numerical Studies}
\label{sec:num}

\subsection{Random matrix generation}
\label{sec:sim} 

We first assess the performance of the proposed Gibbs sampler (GS) as well as the variants of the MH methods in Section~\ref{sec:mh}, to generate samples from MGIG distributions. 
For comparison, we also employ three Metropolis-Hastings algorithms, MH1, MH2 nad HR, in Section~\ref{sec:mh}. 
Regarding the tuning parameter $\rho = \rho _0 - p -1$ in the proposal Wishart distribution in MH2, we searched over $\rho=1,2,\ldots,10$ and set $\rho=5$ as the best choice maximizing the sampling efficiency under $p=5$. 
In this study, for ${\rm{MGIG}}_p(\lambda, \bPsi,\bGa)$, we set $\lambda=2$ and $\bGa=\I$, and considered three cases of $\bPsi$ given by 
$$
{\rm (I)}: \bPsi=\I , \ \ \ 
{\rm (II)}: \bPsi={\rm diag}(1,\ldots,1, 10, 50), \ \ \ 
{\rm (III)}: \bPsi={\rm diag}(1,\ldots,p).
$$
Regarding the dimension $p$, we considered $p=5, 10,\ldots,100$.
In implementing those samplers for various $p$ and three scenarios of $\bPsi$, we generated $50,000$ samples after discarding $5,000$ samples as burn-in.  
To evaluate the sampling performance, we compute effective sample sizes (ESS) of each element of $p\times p$ matrix and averaged ESS over all of the $p(p+1)/2$ elements. 

In Figure~\ref{fig:generation}, we report ESS and ESS per second of the four sampling algorithms under three scenarios of $\bPsi$. 
First, it is confirmed that the proposed Gibbs sampler has the highest raw ESS in most scenarios, regardless of $p$, being as efficient as the direct, independent sampling. 
In contrast, the MH methods do not work well in this study. 
While MH1 and MH2 provide reasonable ESS values under low or moderate $p$, their ESS rapidly decreases as $p$ increases, particularly in Scenarios (II) and (III). This observation is predicted from our analysis of the average acceptance rate in Section~\ref{sec:theory}. 
To be fair, we note that MH2 has a higher ESS per second that GS in Scenarios (I) and (II). In these cases, the MH method can run the MCMC algorithm longer in a short computational time. The GS method is useful for the MGIG distribution of $\bPsi$ with large diagonals, as seen in its higher ESS in Scenario (III).

We would like to emphasize that these scenarios assume $\lambda = 2$, which is small relative to the values used in the literature. This setting explains not only the superiority of the proposed Gibbs sampler, but also that the mode-adjustment of the MH2 method is outperformed by the naive Wishart approximation of the MH1 method in many cases. When $\lambda$ is large, the MH methods work better in terms of ESS and become more competitive, as confirmed in the literature and predicted from the result of Section~\ref{sec:theory}. We double-check this by conducting the additional simulation studies with $\lambda = 10$. See the Supplementary Materials (Section~S5) for its details.

The time-consuming but highly-efficient aspect of GS is also essential when the sampler is incorporated into a larger MCMC algorithm for more structured statistical models, as demonstrated in the subsequent subsections. 
\begin{figure}[!htb]
\centering
\includegraphics[width = \linewidth]{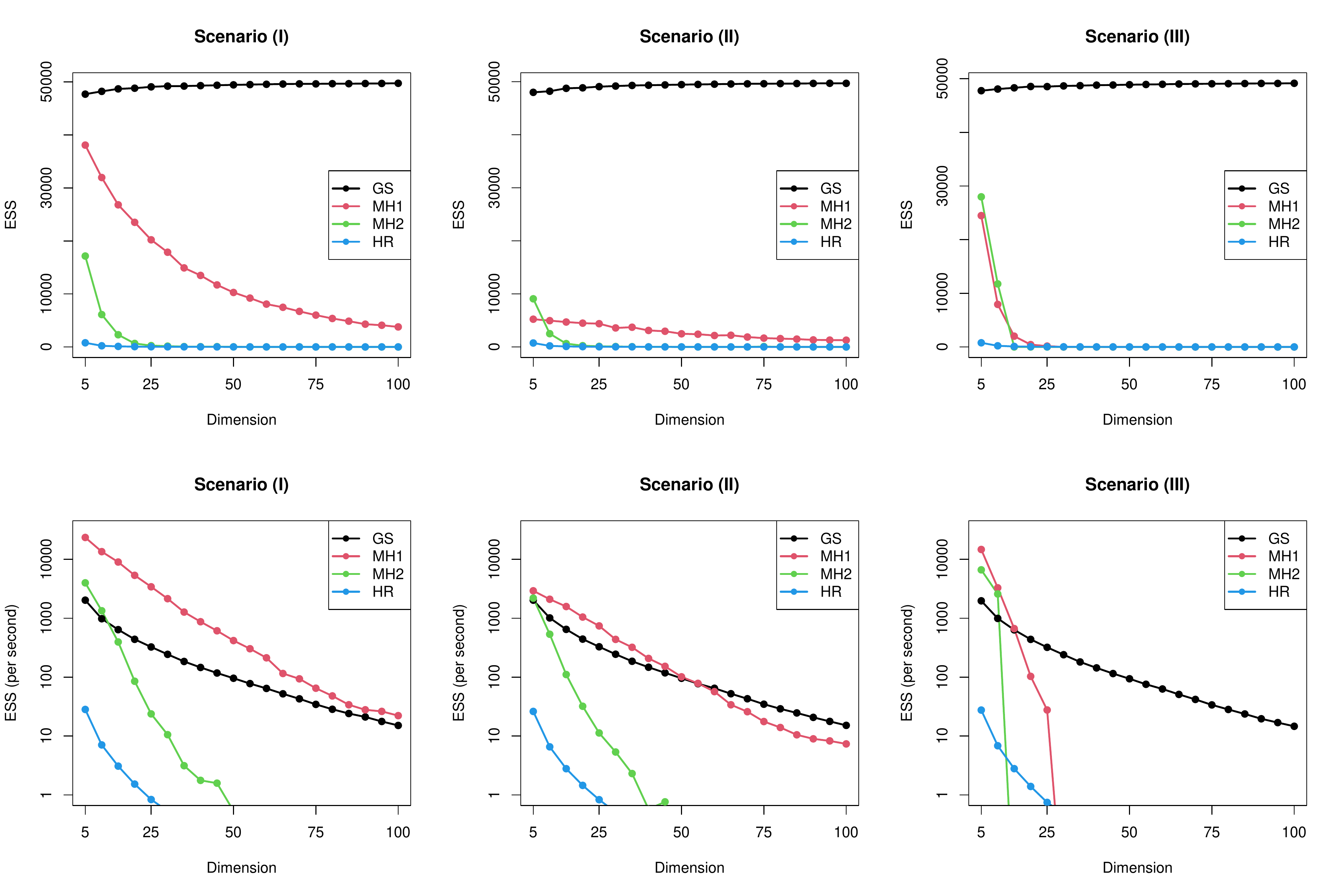}
\caption{Effective sample size (ESS) and ESS per second of the four samplers.}
\label{fig:generation}
\end{figure}

\subsection{Partial Gaussian graphical modeling}
\label{sec:pggm}

We next consider the use of the proposed Gibbs sampler as a part of MCMC algorithm. 
To this end, we here consider posterior inference on partial Gaussian graphical models. 
Let $\Y $ be an $n{\times}q$ response matrix and $\X$ an $n{\times}p$ covariate matrix. Based on Section~2 of \cite{obiang2022bayesian}, we consider the following partial Gaussian graphical model with sparsity: 
\begin{align*}
\Y|\X,\bDe,\bOm_y &\sim N_{n\times q}(-\X\bDe^{\top}\bOm_y^{-1},\I_n,\bOm_y^{-1}), \\
\bDe _k |\bOm _y, \lambda_k , \pi &\sim (1-\pi ) N_q(\0_q,\lambda_k\bOm_y) + \pi \delta_{0_q},   \ \ \ \ 
 \lambda_k \sim {\rm{Ga}}(\alpha ,\ell_k ) \ \ \ \ \ k=1,\dots ,p, 
\end{align*}
with priors $\bOm_y \sim {\rm{W}}_q( u, V )$ and $\pi \sim {\rm{Be}}( a,b )$, where $\delta_{0_q}$ is the point-mass distribution on the $q$-dimensional zero vector, $\bOm_y$ a $q{\times}q$ positive definite matrix, $\bDe$ a $q{\times}p$ regression coefficient matrix, and $\bDe_k$ the $k$th column vector of $\bDe$. 
This model can be rewritten for the conditionally-independent multivariate observations as  
\begin{equation}\label{PGGM}
\Y_i \sim N_q( \bOm_y^{-1} \bDe \X_i, \bOm_y^{-1} ), \ \ \ \ \ i=1,\dots, n, 
\end{equation}
where $\Y_i^{\top}$ and $\X_i^{\top}$ are the $i$-th row vectors of $\Y$ and $\X$, respectively. The prior for $\bDe$ is the spike-and-slab prior and introduces the sparsity in the coefficient matrix. The variance matrix, $\bOm_y^{-1}$, is also used in the location of $\Y$ to introduce the skewness of observations.

\bigskip
The MCMC algorithm for the posterior analysis of this model has been given in Proposition 2.1 of \cite{obiang2022bayesian}, except for the sampler for $\bOm _y$. 
The full conditional of $\bOm_y$ becomes the matrix generalized inverse Gaussian distribution,
\begin{equation*}
    {\rm{MGIG}_q}( (n+N_0+u-2p-1) /2, \Y^{\top}\Y+\V^{-1} , \bDe \{ \X^{\top}\X + \diag (\lambda _1^{-1},\dots, \lambda _p^{-1} \} \bDe ^{\top} ),
\end{equation*} 
where $N_0 = \sum _{i=1}^p \mathbbm{1}[\bDe _i = 0]$, the number of the all-zero column vectors of $\bDe$. 
In the original algorithm, the simulation from the MGIG distribution is replaced with the plug-in of its mode, $\bSi = \Lambda_0$, or the solution of the algebraic Riccati equation (\ref{eq:riccati}), which we call the mode imputation (MI) method in what follows. 
Hence, to be rigorous, the original algorithm in \cite{obiang2022bayesian} is not a valid MCMC method. 
Alternatively, we employ the proposed Gibbs sampler (GS), MH method with a Wishart proposal and the hit-and-run MH method (HR) to complement the original algorithm. 

\bigskip
We consider simulation studies to evaluate the performance of MCMC with various sampling (update) schemes for $\bOm_y$.
Throughout the simulation studies, we set $n=100$ and use hyperparameters $\al = (q + 1) / 2$, $l_k=1 \ (k=1,\ldots,q)$, $u = q$, $\V = \I_q / q$, and $a = b = 1$. 
Following the simulation studies in \cite{obiang2022bayesian}, we first generate each element of $\X$ from ${\rm{U}} (0, 1 / 3)$ independently, and then generate a synthetic sample $\Y_i$ from (\ref{PGGM}), where the true values are obtained as $\bOm_y=2\C_q^{-1}$ with $\C_q=(0.5^{|j-k|})_{1\leq j,k\leq q}$ and $\bDe_{k}\sim 0.5N_q(\0_q, \bOm_y)+0.5\delta_{0_q}$.

\bigskip
We first set $q=3$ and $p=10$, and run the three MCMC algorithms. In each algorithm, we obtain $20,000$ posterior samples and take samples at every five iterations after discarding the first $2,000$ samples. 
We show the traceplots for $(\bOm_y)_{11}$, $(\bOm_y)_{12}$, $\bDe_{14}$ and $\bDe_{24}$ in Figure~\ref{fig:PGGM-plot}. The efficiency of the Gibbs sampler (GS) is clear in this plot as well. The HR sampler exhibits some potential autocorrelations of the samples, implying the necessity of longer chains. The MH method is unable to sample $\bOm_y$ at all, fixing it to several values in essence. This undesirable aspect of the mixing of $\bOm_y$ makes the posterior of $\bDe$ to a mixture, as can be read in the figure.

\bigskip
Next, we computed the matrix mean squared errors (MSEs) of posterior means for $\bOm_y$ and $\bDe$, based on the four MCMC algorithms. 
To see the effect of the number of MCMC samples on the MSE, we show the MSEs computed at every 5000 iterations under $p=10$ and $q=3, 7$ and $15$ in Figure~\ref{fig:PGGM-MSE}.
As expected from the (in)efficiency observed in Figure~\ref{fig:PGGM-plot}, the MH method has significantly higher MSEs than the GS method does for all the parameters, even in the cases of longer Markov chains. The HR method can improve the accuracy of estimation by running the algorithm longer, but 30000 iterations are still not enough to be competitive with the GS method. The MI method, or the ad-hoc plug-in approach, results in the worst MSEs, highlighting the importance of formally quantifying the posterior uncertainty of $\bOm_y$.

\bigskip
Finally, we check the averaged ESSs (scaled by computation time) of $\bOm_y$ and $\bDe$ for the GS and HR methods, computing the median of 100 replications and summarizing them as the function of $q$ in Figure~\ref{fig:PGGM-ESS}. 
Note that $\bOm_y$ and $\bDe$ are $q{\times}q$ and $q{\times}p$ matrices, respectively. The GS method outperforms the HR method for both parameters, and its difference in ESSs grows as $q$ increases. To sum, we confirm in this example that the use of the Gibbs sampler is strongly advised in applications that involve the MGIG distributions.

\begin{figure}[!htb]
\centering
\includegraphics[width = \linewidth]{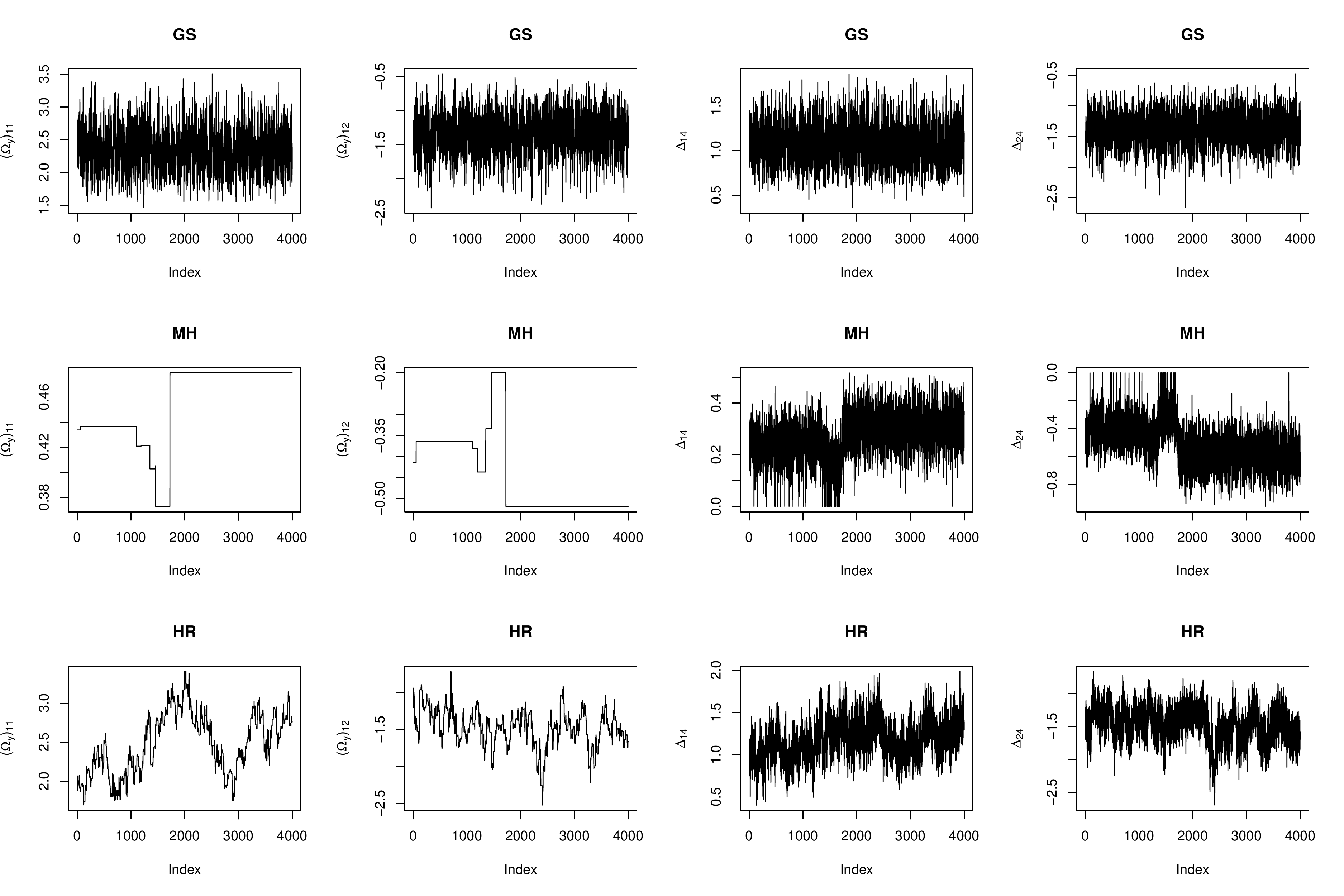}
\caption{Traceplots of $(\bOm_y)_{11}$, $(\bOm_y)_{12}$, $\bDe_{14}$ and $\bDe_{24}$ obtained by the MCMC algorithm with three different MGIG samplers, the proposed Gibbs sampler (GS), independent MH algorithm (MH) and hit-and-run sampler (HR) under $q=3$ and $p=10$. (The original 20000 scans are thinned to 4000 for this figure.)}
\label{fig:PGGM-plot}
\end{figure}

\begin{figure}[!htb]
\centering
\includegraphics[width = \linewidth]{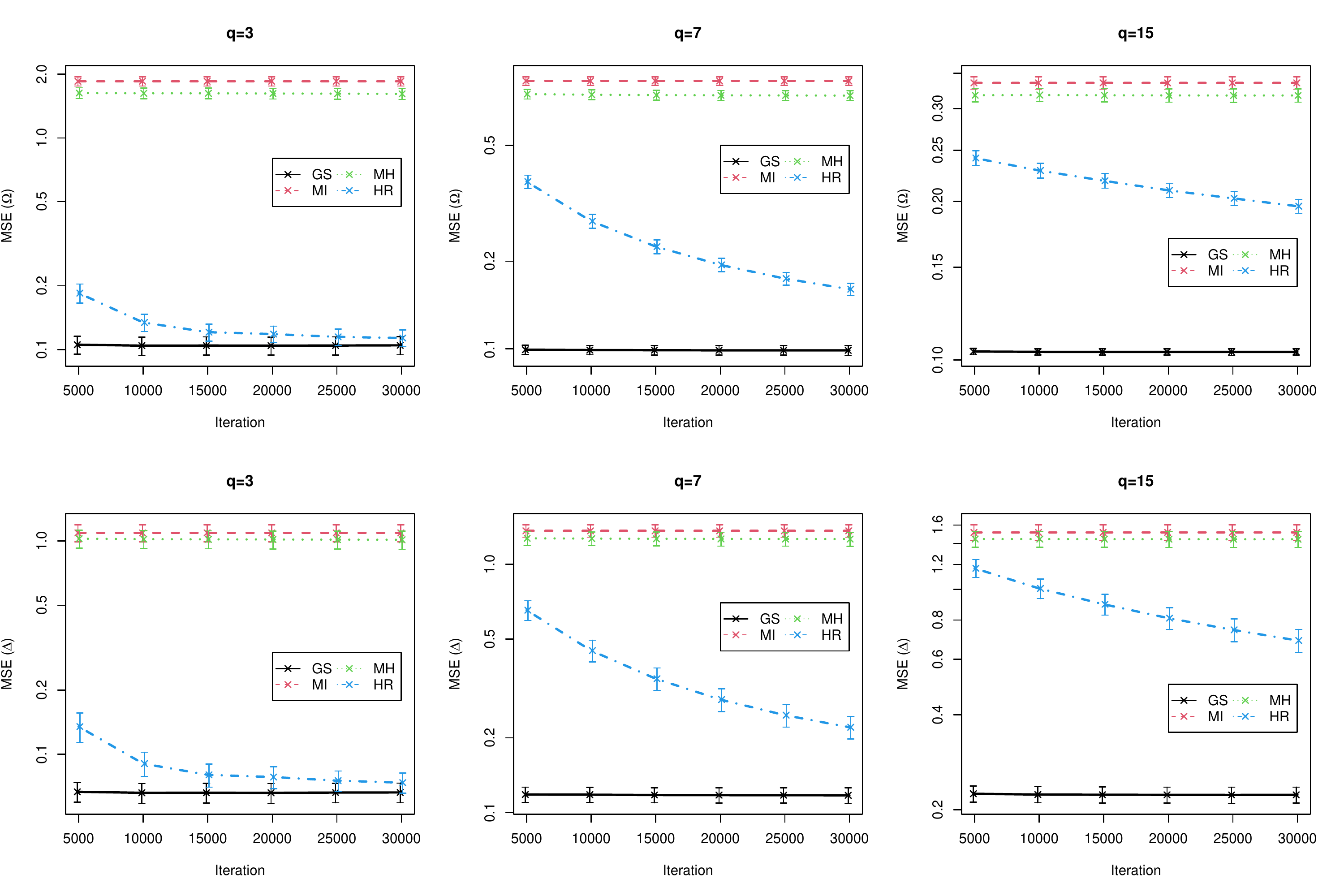}
\caption{Matrix mean squared errors (MSEs) of the posterior means of the MCMC algorithms with four different MGIG samplers, as a function of the number of MCMC iterations, under three choices of $q$ (dimension of $\bOm_y$) and $p=10$. }
\label{fig:PGGM-MSE}
\end{figure}

\begin{figure}[!htb]
\centering
\includegraphics[width = \linewidth]{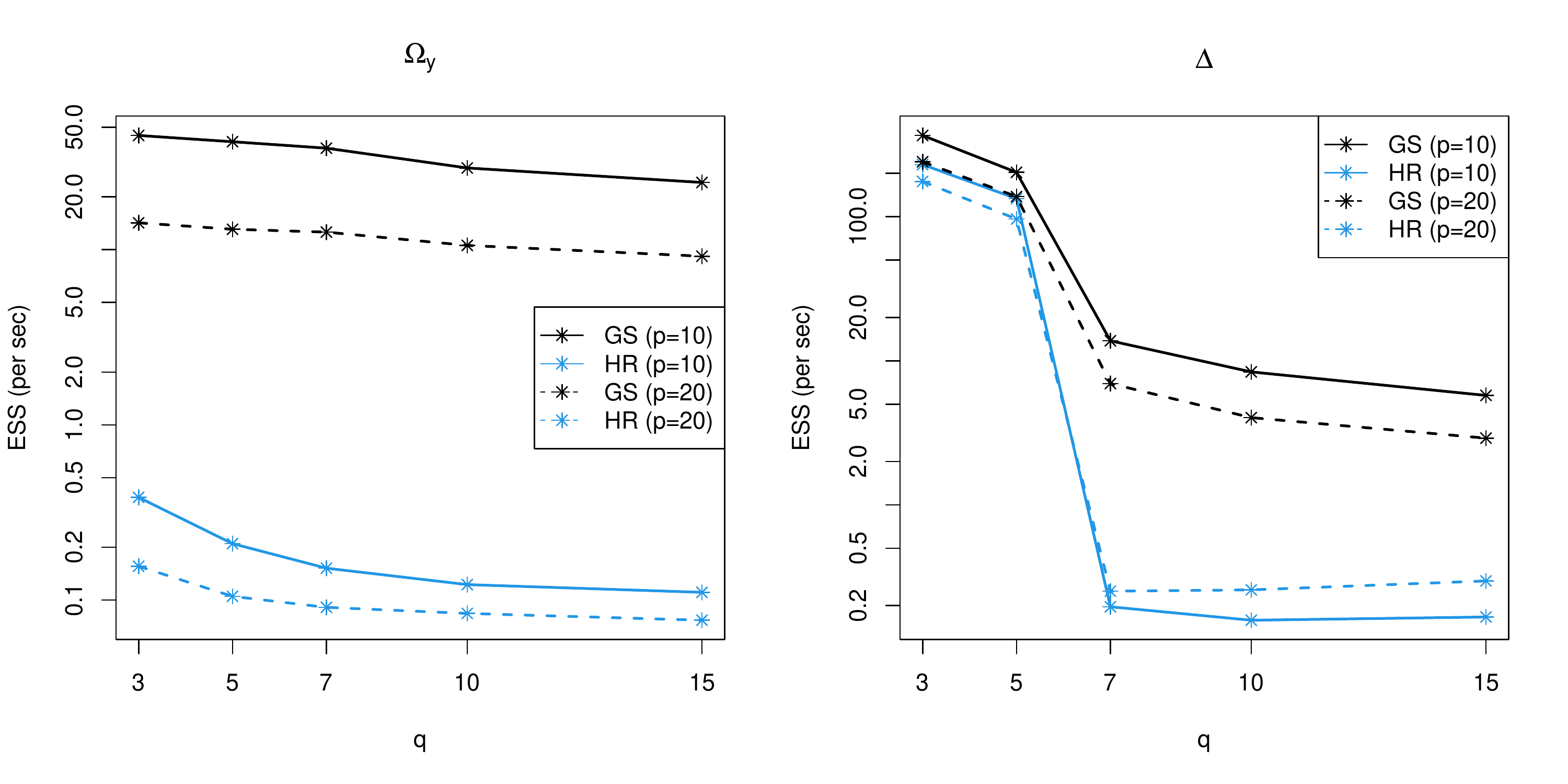}
\caption{Median of ESSs of the GS and HR methods replicated for 100 times, under five choices of $q$ (dimension of $\bOm_y$) and two choices of $p$.}
\label{fig:PGGM-ESS}
\end{figure}

\subsection{Matrix skewed-$t$ distributions}\label{sec:num-MST}
As seen in the graphical model of Section~\ref{sec:pggm}, a typical class of statistical models where the MGIG distributions naturally arise is the mean-variance mixture of multivariate/matrix-variate normal distributions. However, in the literature, such multivariate models are often limited to the mixtures by scaler latent variables for simplicity and computational feasibility. Examples include the multivariate generalized hyperbolic distributions \cite{protassov2004based} and matrix skew-$t$ distributions \citep{gallaugher2017matrix}. In this subsection, we consider the Wishart mixture of matrix-variate normals as the extension of the aforementioned matrix skew-$t$ model, the posterior inference of which is enabled by the proposed Gibbs sampler. 

For $p\times q$ matrix observations $\Y_1,\ldots,\Y_n$, we define the matrix skew-$t$ model as the following matrix mixture: 
\begin{equation}\label{MST}
\Y_i|\W_i\sim N_{p,q}(\M + \W_i \B , \W_i, \bOm), \ \ \ \ \W_i\sim {\rm IW}_p(\bPsi, \nu), \ \ \ i=1,\ldots,n,
\end{equation}
where $\M$ and $\B$ are $p\times q$ matrix parameters representing mean and skewness parameters, respectively, $\bOm$ is a $q\times q$ covariance matrix, and $\nu$ and $\bPsi$ are the scalar and $p\times p$ positive definite matrix parameters of the Wishart distribution, respectively. 
For identifiability, the $(1,1)$-entry of $\bPsi$ is set to unity. 
Here $\W_i$ is a $p\times p$ latent matrix.
Note that, when $\B=\O$, the marginal model (\ref{MST}) reduces to the matrix-$t$ distribution \citep[e.g.][]{dawid1981some, thompson2020classification}.

In what follows, we fix $\nu$ and introduce prior distributions for the other parameters: $\M\sim N_{p,q}(\A_{0M}, \U_{0M}, \V_{0M})$, $\B\sim N_{p,q}(\A_{0B}, \U_{0B}, \V_{0B})$, $\bPsi\sim {\rm W}_p(\bPsi_0, \eta_0)$ and $\bOm\sim {\rm IW}_q(\bOm_0, \xi_0)$.
Then, the full conditional distributions of the latent matrix $\W_i^{-1}$ is ${\rm MGIG}_p((\nu+q-p-1)/2, \tilde{\bGa}_i, \tilde{\bPhi}_i)$, where 
$$
\tilde{\bPhi}_i=\B\bOm^{-1}\B^\top, \ \ \ \ 
\tilde{\bGa}_i=\bPsi + (\Y_i-\M)\bOm^{-1}(\Y_i-\M)^\top.
$$
The details of the other full conditional distributions are given in the Supplementary Material (Section~S6).

To illustrate the matrix skew-$t$ (MST) model, we take the landsat satellite data analyzed in \cite{thompson2020classification}. 
This multi-spectral satellite imagery data \citep{Dua:2019} records images in two visible and two infrared bands ($q=4$) on $3\times 3$ pixel segments ($q=9$), yielding $4\times 9$ matrix observations. These observations are labeled according to the terrain types, resulting three datasets: cotton crop ($n=479$), gray soil ($n=961$) and soil with vegetation stubble segments ($n=470$). The MST model (\ref{MST}) is fitted to each of the three models individually. 
For comparison, we also fitted a matrix $t$ (MT) distribution \citep[e.g.][]{dawid1981some, thompson2020classification} to see the benefit of the skewness introduced in (\ref{MST}). 

For each class of the satellite imagery data, we fit both MST and MT models with $\nu =5$ and $10$.
In applying the MST model, we use three samplers (GS, MH and HR) to generate the latent matrix $\W_i$.
Note that the MCMC algorithms for fitting the MT model does not require sampling from the MGIG distribution. 
In each algorithm, we generated 5,000 posterior samples after discarding the first 1,000 samples. 
First, we compute posterior predictive loss \citep{gelfand1998model} of the MST and MT models based on the outputs of Gibbs samplers, and report the results Table~\ref{tab:Sat-PPL}. 
It shows that the MST model with $\nu=10$ attains the smallest posterior predictive loss, indicating the improved model fit to this dataset by introducing the skewness structures. 
In Table~\ref{tab:Sat-ESS}, we present ESSs of the MCMC algorithms with three different samplers for the MGIG distribution. 
Unlike the results in the previous section, the MH method performs reasonably well compared with the HR method, and is even competitive with the GS method in a few cases. Still, the ESSs of the GS methods are significantly better than those of the other two methods in most of the data analyses. 
\begin{table}[htbp!]
\caption{Posterior predictive loss of MST and MT models with two choices of degrees of freedom, $\nu=5$ and $10$. 
}
\label{tab:Sat-PPL}
\begin{center}
\begin{tabular}{ccccccccccc}
\hline
Class &  & MST($\nu=5$) & MST($\nu=10$) & MT($\nu=5$) & MT($\nu=10$) \\
\hline
cotton crop &  & 341 & 288 & 523 & 452 \\
gray soil &  & 101 & 90 & 112 & 100 \\
vegetation &  & 266 & 231 & 307 & 272 \\
\hline
\end{tabular}
\end{center}
\end{table}

\begin{table}[htbp!]
\caption{The effective sample size (ESS) of the matrix parameters in MST model under Gibbs sampler (GS), Metropolis-Hastings  algorithm (MH) with a Wishart proposal and Hit-and-Run sampler(HR) for the MGIG full conditional distribution. 
}
\label{tab:Sat-ESS}
\begin{center}
\begin{tabular}{ccccccccccc}
\hline
&&& \multicolumn{3}{c}{$\nu=5$} &&  \multicolumn{3}{c}{$\nu=10$}\\
Parameter & Class &  & GS & MH & HR &  & GS & MH & HR \\
 \hline
 & cotton crop &  & 843 & 419 & 39 &  & 877 & 283 & 33 \\
$\W$ & gray soil  &  & 1523 & 1200 & 44 &  & 1522 & 1135 & 39 \\
 & vegetation &  & 1821 & 1425 & 49 &  & 1527 & 1001 & 38 \\
 \hline
 & cotton crop &  & 637 & 551 & 103 &  & 369 & 301 & 85 \\
$\B$ & gray soil  &  & 699 & 692 & 124 &  & 444 & 429 & 98 \\
 & vegetation &  & 708 & 671 & 133 &  & 447 & 443 & 95 \\
 \hline
 & cotton crop &  & 1655 & 1678 & 165 &  & 1139 & 755 & 62 \\
$\bPsi$ & gray soil  &  & 1889 & 1724 & 160 &  & 1278 & 1091 & 61 \\
 & vegetation &  & 1796 & 1799 & 142 &  & 1273 & 1037 & 64 \\
 \hline
 & cotton crop &  & 1151 & 935 & 114 &  & 791 & 452 & 71 \\
$\bOm$ & gray soil  &  & 1812 & 1731 & 182 &  & 2047 & 1727 & 172 \\
 & vegetation &  & 1268 & 1131 & 61 &  & 1162 & 893 & 68 \\
\hline
\end{tabular}
\end{center}
\end{table}

\section{Concluding Remarks}
\label{sec:conclusion}

Sampling from the MGIG distribution is often an unavoidable step of posterior computation in many statistical models, but is rarely discussed as the main computational issue. Some ad-hoc alternatives to the exact sampling, such as plugging the point estimate, have been practiced, but could disprove both the sampling efficiency and the accuracy of posterior computation significantly deteriorated, as we observed in Section~\ref{sec:pggm}. Our Gibbs sampler is an answer to this computational problem, enabling the precise implementation of the MCMC methods for the models involving the MGIG distributions.

\appendix 

\section{Appendix}

\subsection{Sampling from MGIG distributions with degenerate matrix parameters}
\label{app:deg}

\cite{MASSAM2006103} showed the following property of the MGIG distribution with degenerate matrix parameters, known as the Matsumoto-Yor property.

\begin{thm}[\citealt{MASSAM2006103}]
\label{lem:MW_2} 
Let $p, q \in \mathbb{N}$. 
Let $\la > - 1$ and let $\bTh $ be a $p \times q$ matrix of full rank. 
Let $\bPsi $ be a $p \times p$ positive definite matrix. 
Suppose that 
\begin{align}
\X \sim {\rm{MGIG}}_q (- \la - 1 - q, \bTh ^{\top } \bPsi \bTh , \I ) \quad \text{and} \quad \Y \sim {\rm{W}}_p ( 2\la + p + 1, \bPsi ^{- 1} ) \non 
\end{align}
are independent. 
Then 
\begin{align}
\bTh \X \bTh ^{\top } + \Y \sim {\rm{MGIG}}_p ( \la , \bPsi , \bTh \bTh ^{\top } ) \text{.} \non 
\end{align}
\end{thm}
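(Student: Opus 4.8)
The plan is to verify the claimed distributional identity by directly manipulating the joint density of $(\X, \Y)$ and integrating out $\X$, using the explicit form of the normalizing constants. First I would write down the joint density $p(\X, \Y) \propto p_{\X}(\X) p_{\Y}(\Y)$ as the product of the MGIG density of $\X$ with parameters $(-\la - 1 - q, \bTh^{\top} \bPsi \bTh, \I)$ and the Wishart density of $\Y$ with parameters $(2\la + p + 1, \bPsi^{-1})$. Then I would introduce the change of variables $\bSi = \bTh \X \bTh^{\top} + \Y$, treating $(\X, \bSi)$ as the new coordinates, and express the Jacobian of the transformation from $(\X, \Y)$ to $(\X, \bSi)$, which is trivial since $\Y = \bSi - \bTh \X \bTh^{\top}$ at fixed $\X$. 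The goal is to show that $\int p(\X, \bSi - \bTh \X \bTh^{\top}) \, d\X$ is proportional to the ${\rm{MGIG}}_p(\la, \bPsi, \bTh \bTh^{\top})$ density in $\bSi$.

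The key computation is to collect, in the integrand, all the terms involving $\X$ and show that the $\X$-integral produces exactly the matrix Bessel factor (or the appropriate power of $|\bSi|$ and exponential trace terms) needed to match the target density. Concretely, the exponent combines $-\tr(\bTh^{\top}\bPsi\bTh \, \X)/2$ and $-\tr(\X^{-1})/2$ from the MGIG density of $\X$ with $-\tr\{(\bSi - \bTh\X\bTh^{\top})\bPsi\}/2$ from the Wishart density of $\Y$; the linear-in-$\X$ terms $-\tr(\bTh^{\top}\bPsi\bTh\,\X)/2$ and $+\tr(\bTh\X\bTh^{\top}\bPsi)/2$ cancel by cyclicity of the trace, which is the crucial algebraic simplification that decouples $\X$ from $\bSi$ in the linear part. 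What remains inside the $\X$-integral is then itself proportional to an MGIG density in $\X$ (or reduces to a known integral representation of the matrix Bessel function), so the integral evaluates to the matrix Bessel factor $\mathcal{B}_{\lambda_0}(\cdot)$ appearing in the normalizing constant $c_p(\la, \bPsi, \bTh\bTh^{\top})$.

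The remaining steps are bookkeeping of powers of determinants: the factor $|\Y|^{\la}$ from the Wishart density combines with the powers arising from the $\X$-integration to yield the correct $|\bSi|^{\la}$ dependence, and one must track the matching of the orders, using the relation $\la = \la_0 - (p+1)/2$ together with the shifted order $-\la - 1 - q$ of the MGIG factor of $\X$ to confirm that the index of the resulting Bessel function is consistent. I would finish by comparing the fully assembled density in $\bSi$ against the ${\rm{MGIG}}_p(\la, \bPsi, \bTh\bTh^{\top})$ density and confirming that all $\X$-free normalizing constants combine correctly, so that the proportionality is in fact an equality of probability densities.

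The hard part will be the determinant and normalizing-constant bookkeeping, particularly verifying that the $\X$-integral genuinely yields the matrix Bessel function with the right index rather than merely a quantity proportional to it up to an unaccounted-for factor; the trace cancellation that decouples the linear terms is the conceptual crux, but the precise matching of the Bessel-function orders and the determinant powers across the change of variables is where the calculation is most error-prone and must be done carefully, likely by appealing to the integral representation of $\mathcal{B}_{\lambda_0}$ in \citet{herz1955bessel}.
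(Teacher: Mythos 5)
A preliminary remark: the paper itself contains no proof of this statement --- Theorem~\ref{lem:MW_2} is imported verbatim from \cite{MASSAM2006103} --- so your proposal has to be judged on its own terms. Your first two steps are sound: the map $(\X,\Y)\mapsto(\X,\bSi)$ with $\bSi=\bTh\X\bTh^{\top}+\Y$ has unit Jacobian, and the linear-in-$\X$ terms $-\tr(\bTh^{\top}\bPsi\bTh\X)/2$ and $+\tr(\bPsi\bTh\X\bTh^{\top})/2$ do cancel by cyclicity. But the next step, which you treat as routine, contains a genuine gap. After the cancellation, what must be integrated is
\begin{align}
\int |\X|^{-\la-1-q}\,\bigl|\bSi-\bTh\X\bTh^{\top}\bigr|^{\la}\,\etr(-\X^{-1}/2)\,d\X \non
\end{align}
over the set $\{\X>\O :\ \bSi-\bTh\X\bTh^{\top}>\O\}$; this support restriction, inherited from $\Y>\O$, never appears in your write-up and is essential. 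This integrand is \emph{not} proportional to an MGIG density in $\X$: the coupling between $\X$ and $\bSi$ sits inside the determinant $|\bSi-\bTh\X\bTh^{\top}|^{\la}$ rather than in an exponential trace, so it matches neither the MGIG form nor the integral representation of $\mathcal{B}_{\la_0}$. Accordingly, the integral does not ``evaluate to the matrix Bessel factor''; no Bessel function is needed anywhere, since unnormalized densities suffice on both sides.

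The ingredients your plan is missing are: (i) Sylvester's determinant identity $|\bSi-\bTh\X\bTh^{\top}|=|\bSi|\,|\X|\,|\X^{-1}-\bTh^{\top}\bSi^{-1}\bTh|$; (ii) the Schur-complement restatement of the support constraint as $\X^{-1}>\bTh^{\top}\bSi^{-1}\bTh$; (iii) the inversion $\W=\X^{-1}$, whose Jacobian $d\X=|\W|^{-(q+1)}d\W$ combines with the surviving power $|\X|^{-\la-1-q}\cdot|\X|^{\la}=|\X|^{-(q+1)}$ to give exactly $d\W$; and (iv) the shift $\V=\W-\bTh^{\top}\bSi^{-1}\bTh$, which by (ii) ranges over all of $\V>\O$. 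These reduce the display above to
\begin{align}
|\bSi|^{\la}\,\etr(-\bTh\bTh^{\top}\bSi^{-1}/2)\int_{\V>\O}|\V|^{\la}\,\etr(-\V/2)\,d\V
=|\bSi|^{\la}\,\etr(-\bTh\bTh^{\top}\bSi^{-1}/2)\cdot 2^{q\{\la+(q+1)/2\}}\,\Ga_q\bigl(\la+(q+1)/2\bigr) \text{,} \non
\end{align}
where the multivariate gamma integral is finite precisely because $\la>-1$ --- this is where the hypothesis enters. Multiplying by the decoupled factor $\etr(-\bPsi\bSi/2)$ yields $p(\bSi)\propto|\bSi|^{\la}\etr\{-(\bPsi\bSi+\bTh\bTh^{\top}\bSi^{-1})/2\}$, i.e.\ the ${\rm{MGIG}}_p(\la,\bPsi,\bTh\bTh^{\top})$ density. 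So your overall route is salvageable, but as written it would stall at the central integral: neither of the two evaluations you propose for it is available, and the real content of the proof is the determinant identity plus the domain bookkeeping, not Bessel-function bookkeeping.
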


When $\bSi \sim {\rm{MGIG}}_p(\lambda ,\bPsi ,\bGa)$ and the rank of $\bGa$ is $q$ ($q<p$), one can consider the decomposition of $\bGa = \bTh \bTh^{\top}$ for some full-rank $p\times q$ matrix $\bTh$, and sample $\bSi$ by simulating $\X$ and $\Y$ as described above and setting $\bSi = \bTh\X\bTh^{\top}+\Y$. Then, the problem reduces to the simulation from ${\rm{MGIG}}_q(-\lambda - 1 -q, \bTh^{\top} \bPsi \bTh, \I )$, the MGIG distribution with full-rank matrix parameters, which is covered in this article. 
The case of degenerate $\bPsi $ can be discussed similarly. %

\cite{fang2020bayesian} utilize the Matsumoto-Yor property of the MGIG distributions and enable the direct sampling from the MGIG distribution when $q = 1$. Combined with this idea, the MH and Gibbs sampler proposed in this article can be extended to an arbitrary MGIG distribution.

\section*{Acknowledgments}
Research of the authors was supported in part by JSPS KAKENHI Grant Number 22K20132, 19K11852, 17K17659, and 21H00699 from Japan Society for the Promotion of Science.

\vspace{1cm}
\bibliographystyle{chicago}
\bibliography{mgig}

\begin{thebibliography}{}

\bibitem[\protect\citeauthoryear{Barndorff-Nielsen, Blaesild, Jensen, and
  J{\o}rgensen}{Barndorff-Nielsen et~al.}{1982}]{barndorff1982exponential}
Barndorff-Nielsen, O., P.~Blaesild, J.~L. Jensen, and B.~J{\o}rgensen (1982).
\newblock Exponential transformation models.
\newblock {\em Proceedings of the Royal Society of London. A. Mathematical and
  Physical Sciences\/}~{\em 379\/}(1776), 41--65.

\bibitem[\protect\citeauthoryear{Butler}{Butler}{1998}]{butler1998generalized}
Butler, R.~W. (1998).
\newblock Generalized inverse {G}aussian distributions and their {W}ishart
  connections.
\newblock {\em Scandinavian journal of statistics\/}~{\em 25\/}(1), 69--75.

\bibitem[\protect\citeauthoryear{Butler and Wood}{Butler and
  Wood}{2003}]{butler2003laplace}
Butler, R.~W. and A.~T. Wood (2003).
\newblock Laplace approximation for {B}essel functions of matrix argument.
\newblock {\em Journal of Computational and Applied Mathematics\/}~{\em
  155\/}(2), 359--382.

\bibitem[\protect\citeauthoryear{Dawid}{Dawid}{1981}]{dawid1981some}
Dawid, A.~P. (1981).
\newblock Some matrix-variate distribution theory: notational considerations
  and a bayesian application.
\newblock {\em Biometrika\/}~{\em 68\/}(1), 265--274.

\bibitem[\protect\citeauthoryear{Dua and Graff}{Dua and Graff}{2017}]{Dua:2019}
Dua, D. and C.~Graff (2017).
\newblock {UCI} machine learning repository.

\bibitem[\protect\citeauthoryear{Fang, Karlis, and Subedi}{Fang
  et~al.}{2020}]{fang2020bayesian}
Fang, Y., D.~Karlis, and S.~Subedi (2020).
\newblock A {B}ayesian approach for clustering skewed data using mixtures of
  multivariate normal-inverse {G}aussian distributions.
\newblock {\em arXiv preprint arXiv:2005.02585\/}.

\bibitem[\protect\citeauthoryear{Fazayeli and Banerjee}{Fazayeli and
  Banerjee}{2016}]{fazayeli2016matrix}
Fazayeli, F. and A.~Banerjee (2016).
\newblock The matrix generalized inverse {G}aussian distribution: Properties
  and applications.
\newblock In {\em Joint European Conference on Machine Learning and Knowledge
  Discovery in Databases}, pp.\  648--664. Springer.

\bibitem[\protect\citeauthoryear{Gallaugher and McNicholas}{Gallaugher and
  McNicholas}{2017}]{gallaugher2017matrix}
Gallaugher, M.~P. and P.~D. McNicholas (2017).
\newblock A matrix variate skew-t distribution.
\newblock {\em Stat\/}~{\em 6\/}(1), 160--170.

\bibitem[\protect\citeauthoryear{Gelfand and Ghosh}{Gelfand and
  Ghosh}{1998}]{gelfand1998model}
Gelfand, A.~E. and S.~K. Ghosh (1998).
\newblock Model choice: a minimum posterior predictive loss approach.
\newblock {\em Biometrika\/}~{\em 85\/}(1), 1--11.

\bibitem[\protect\citeauthoryear{Herz}{Herz}{1955}]{herz1955bessel}
Herz, C.~S. (1955).
\newblock Bessel functions of matrix argument.
\newblock {\em Annals of Mathematics\/}, 474--523.

\bibitem[\protect\citeauthoryear{Massam and Wesołowski}{Massam and
  Wesołowski}{2006}]{MASSAM2006103}
Massam, H. and J.~Wesołowski (2006).
\newblock The {M}atsumoto–{Y}or property and the structure of the {W}ishart
  distribution.
\newblock {\em Journal of Multivariate Analysis\/}~{\em 97\/}(1), 103--123.

\bibitem[\protect\citeauthoryear{Obiang, J{\'e}z{\'e}quel, and
  Pro{\"\i}a}{Obiang et~al.}{2022}]{obiang2022bayesian}
Obiang, E.~O., P.~J{\'e}z{\'e}quel, and F.~Pro{\"\i}a (2022).
\newblock A {B}ayesian approach for partial {G}aussian graphical models with
  sparsity.
\newblock {\em Bayesian Analysis\/}~{\em 1\/}(1), 1--26.

\bibitem[\protect\citeauthoryear{Protassov}{Protassov}{2004}]{protassov2004based}
Protassov, R.~S. (2004).
\newblock Em-based maximum likelihood parameter estimation for multivariate
  generalized hyperbolic distributions with fixed $\lambda$.
\newblock {\em Statistics and Computing\/}~{\em 14}, 67--77.

\bibitem[\protect\citeauthoryear{Robert, Casella, and Casella}{Robert
  et~al.}{1999}]{robert1999monte}
Robert, C.~P., G.~Casella, and G.~Casella (1999).
\newblock {\em Monte Carlo statistical methods}, Volume~2.
\newblock Springer.

\bibitem[\protect\citeauthoryear{Thompson, Maitra, Meeker, and
  Bastawros}{Thompson et~al.}{2020}]{thompson2020classification}
Thompson, G.~Z., R.~Maitra, W.~Q. Meeker, and A.~F. Bastawros (2020).
\newblock Classification with the matrix-variate-t distribution.
\newblock {\em Journal of Computational and Graphical Statistics\/}~{\em
  29\/}(3), 668--674.

\bibitem[\protect\citeauthoryear{Wang, Wu, and Chu}{Wang
  et~al.}{2018}]{wwc2018}
Wang, Z., Y.~Wu, and H.~Chu (2018).
\newblock On equivalence of the lkj distribution and the restricted wishart
  distribution.
\newblock {\em arXiv preprint arXiv:1809.04746\/}.

\bibitem[\protect\citeauthoryear{Yang, Li, and Zhang}{Yang
  et~al.}{2013}]{yang2013multi}
Yang, M., Y.~Li, and Z.~Zhang (2013).
\newblock Multi-task learning with {G}aussian matrix generalized inverse
  {G}aussian model.
\newblock In {\em International Conference on Machine Learning}, pp.\
  423--431. PMLR.

\bibitem[\protect\citeauthoryear{Yang and Berger}{Yang and
  Berger}{1994}]{yb1994}
Yang, R. and J.~O. Berger (1994).
\newblock {Estimation of a Covariance Matrix Using the Reference Prior}.
\newblock {\em The Annals of Statistics\/}~{\em 22\/}(3), 1195 -- 1211.

\bibitem[\protect\citeauthoryear{Yin and Xu}{Yin and Xu}{2017}]{yx2017}
Yin, J. and X.~Xu (2017).
\newblock Portfolio optimisation using constrained hierarchical bayes models.
\newblock {\em Statistical Theory and Related Fields\/}~{\em 1}, 112--120.

\bibitem[\protect\citeauthoryear{Yoshii, Tomioka, Mochihashi, and Goto}{Yoshii
  et~al.}{2013}]{yoshii2013infinite}
Yoshii, K., R.~Tomioka, D.~Mochihashi, and M.~Goto (2013).
\newblock Infinite positive semidefinite tensor factorization for source
  separation of mixture signals.
\newblock In {\em International conference on machine learning}, pp.\
  576--584. PMLR.

\end{thebibliography}

\newpage
\setcounter{page}{1}
\setcounter{equation}{0}
\renewcommand{\theequation}{S\arabic{equation}}
\setcounter{section}{0}
\renewcommand{\thesection}{S\arabic{section}}
\setcounter{lem}{0}
\renewcommand{\thelem}{S\arabic{lem}}
\setcounter{thm}{0}
\renewcommand{\thethm}{S\arabic{thm}}
\setcounter{table}{0}
\renewcommand{\thetable}{S\arabic{table}}
\setcounter{figure}{0}
\renewcommand{\thetable}{S\arabic{figure}}

\renewcommand{\thefigure}{S\arabic{figure}}
\renewcommand{\thetable}{S\arabic{table}}

\begin{center}
{\LARGE {\bf Supplementary Materials for ``Gibbs Sampler for Matrix Generalized Inverse Gaussian Distributions''}}
\end{center}

\vspace{1cm}
This Supplementary Materials provide theoretical details of the main document and additional simulation results.
In Section~S1, we state Theorem~1 in Section~3 precisely. Then, in Section~S2, we prove the theorem to provide the full conditional distributions used in Algorithm~1. 
In Section~S3, we prove the results on the limit of the average acceptance rates of the MH method. 
In Section~S4, we explain the possible improvement of the Gibbs sampler by parallelization. 
In Section~S5, we report the additional results about the simulation study in Section~4.1. 
In Section~S6, we summarize the Gibbs sampler for the matrix-skew-$t$ distributions used in Section~4.3. 

\ 

\noindent
\underline{\textbf{Notations:}}

\begin{itemize}
\item
For any $m \in \mathbb{N}$, we write $\O ^{(m)}$ and $\I ^{(m)}$ for the $m \times m$ zero and identity matrices, respectively. 
\item
For any $m, n \in \mathbb{N}$, we write $\O ^{(m, n)}$ for the $m \times n$ zero matrix. 
\item
For any $m \in \mathbb{N}$, we write $\bm{0} ^{(m)}$ for the $m$-dimensional zero vector. 
\item
For any $m \in \mathbb{N}$, we write $\e _{i}^{(m)}$ for the $i$-th column vector of $\I ^{(m)}$ for $i = 1, \dots , m$. 
\item
For any $m \in \mathbb{N}$, we write $\E _{i, j}^{(m)} = \e _{i}^{(m)} ( \e _{j}^{(m)} )^{\top }$ for $i, j = 1, \dots , m$. 
\item
As in the main text, for any $m \in \mathbb{N}$, if $\c _1 , \dots , \c _m$ are vectors, we write $( \c _i )_{i = 1}^{m}$ for $( {\c _1}^{\top } , \dots , {\c _m}^{\top } )^{\top }$. 
\item
As in the main text, for any $m, n \in \mathbb{N}$, if $\C $ is an $m \times n$ matrix and if $c_{i, j}$ is the $(i, j)$-th element of $\C $ for $i = 1, \dots , m$ and $j = 1, \dots , n$, we write $( \C )_{\underline{i}:{\overline{i}}, \underline{j}:{\overline{j}}}$ for the submatrix 
\begin{align}
\begin{pmatrix} c_{\underline{i} , \underline{j}} & \cdots & c_{\underline{i}, \overline{j}} \\ \vdots & \ddots & \vdots \\ c_{\overline{i} , \underline{j}} & \cdots & c_{\overline{i} , \overline{j}} \end{pmatrix} \non 
\end{align}
for $1 \le \underline{i} \le \overline{i} \le m$ and $1 \le \underline{j} \le \overline{j} \le n$. 
\item
As in the main text, for any $m \in \mathbb{N}$, if $a_1 , \dots , a_m > 0$ and if $\A = \bdiag ( a_1 , \dots , a_m )$, we write $\A ^{1 / 2} = \bdiag ( \sqrt{a_1} , \dots , \sqrt{a_m} )$ and $\A ^{- 1 / 2} = \bdiag ( 1 / \sqrt{a_1} , \dots , 1 / \sqrt{a_m} )$. 
\end{itemize}

\section{Full conditional distributions}

\begin{thm}
\label{thm:conditional_density} 
\hfill
\begin{itemize}
\item[{\rm{(i)}}]
The joint density of $\a $ and $\b $ is 
\begin{align}
p( \a , \b ) &\propto %
\Big( \prod_{i = 1}^{p} {a_i}^{\la + p - i} \Big) \exp [- \tr \{ \A ^{1 / 2} \B ^{\top } \bPsi \B \A ^{1 / 2} + \A ^{- 1 / 2} \B ^{- 1} \bGa ( \B ^{- 1} )^{\top } \A ^{- 1 / 2} \} / 2] \text{.} \non 
\end{align}
\item[{\rm{(ii)}}]
The conditional distribution of $\a $ given $\b $ is 
\begin{align}
p( \a | \b ) &= \prod_{i = 1}^{p} {\rm{GIG}} ( a_i | \la + p - i + 1, ( \B ^{\top } \bPsi \B )_{i, i} , ( \B ^{- 1} \bGa ( \B ^{- 1} )^{\top } )_{i, i} ) \text{.} \non 
\end{align}
\item[{\rm{(iii)}}]
Let 
\begin{align}
&\B _i = \begin{pmatrix} \e _{1}^{(p)} & \cdots & \e _{i - 1}^{(p)} & \begin{pmatrix} \bm{0} ^{(i - 1)} \\ 1 \\ \b _i \end{pmatrix} & \e _{i + 1}^{(p)} & \cdots & \e _{p}^{(p)} \end{pmatrix} \quad \text{and} \non \\
&\overline{\B } _i = \begin{pmatrix} \e _{1}^{(p)} & \cdots & \e _{i - 1}^{(p)} & \begin{pmatrix} \bm{0} ^{(i - 1)} \\ 1 \\ - \b _i \end{pmatrix} & \e _{i + 1}^{(p)} & \cdots & \e _{p}^{(p)} \end{pmatrix} = 2 \I ^{(p)} - \B _i \non 
\end{align}
for $i = 1, \dots , p$. 
Let 
\begin{align}
&\M _i = {\B _{i - 1}}^{\top } \dotsm {\B _1}^{\top } \bPsi \B _1 \dotsm \B _{i - 1} \text{,} \non \\
&\R _i = \B _{i + 1} \dotsm \B _p \A ^{1 / 2} \text{,} \non \\
&\overline{\M } _{i} = \overline{\B }_{i - 1} \dotsm \overline{\B }_{1} \bGa {\overline{\B }_{1}}^{\top } \dotsm {\overline{\B }_{i - 1}}^{\top } \text{,} \quad \text{and} \non \\
&\overline{\R }_{i} = {\overline{\B }_{i + 1}}^{\top } \dotsm {\overline{\B }_{p}}^{\top } \A ^{- 1 / 2} \non 
\end{align}
for $i = 1, \dots , p$. 
Then for each $i = 1, \dots , p - 1$, %
the conditional distribution of $\b _i$ given $\a $ and $\b _{- i} = \b \setminus \b _i$ is multivariate normal with 
variance 
\begin{align}
\N _i &= a_i ( \bPsi ) _{(i + 1):p, (i + 1):p} + ( \overline{\M }_{i} )_{i, i} (( \B ^{- 1} )^{\top } \A ^{- 1} \B ^{- 1} ) _{(i + 1):p, (i + 1):p} \text{,} \non 
\end{align}
which is independent of $\b _i$, and mean 
\begin{align}
{\N _i}^{- 1} \{ - ( \M _i )_{(i + 1):p, 1:p} \R _i (( \R _i )_{i, 1:p} )^{\top } + ( \overline{\R }_{i} )_{(i + 1):p, 1:p} {\overline{\R }_{i}}^{\top } ( {( \overline{\M }_{i} )_{i, 1:p}} )^{\top } \} \text{.} \non 
\end{align}
\end{itemize}
\end{thm}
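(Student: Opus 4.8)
The plan is to treat the three parts in order, since each builds on the previous one. For part (i) I would apply the change of variables $\bSi = \B \A \B^{\top}$. Because $\B$ is unit lower-triangular, $|\bSi| = |\A| = \prod_i a_i$, so $|\bSi|^{\la}$ contributes $\prod_i a_i^{\la}$; cycling the trace and using $\bSi^{-1} = (\B^{-1})^{\top} \A^{-1} \B^{-1}$ turns the two exponential terms into $\tr(\A^{1/2} \B^{\top} \bPsi \B \A^{1/2})$ and $\tr(\A^{-1/2} \B^{-1} \bGa (\B^{-1})^{\top} \A^{-1/2})$. The only nontrivial ingredient is the Jacobian of $(\a,\b) \mapsto \bSi$, which I would obtain through the auxiliary substitution $\T = \B \A^{1/2}$, so that $\bSi = \T \T^{\top}$ is the Cholesky factorization with $T_{ii} = \sqrt{a_i}$: the standard lower-triangular Cholesky Jacobian $2^p \prod_i T_{ii}^{p-i+1}$, combined with $d a_i = 2 \sqrt{a_i}\, d T_{ii}$ (whose $2^{-p}$ cancels the $2^p$), yields the factor $\prod_i a_i^{p-i}$, giving the exponent $\la + p - i$ on each $a_i$.

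For part (ii) I would fix $\b$ and read the joint density as a function of $\a$ alone. Writing $\M = \B^{\top} \bPsi \B$ and $\overline{\M} = \B^{-1} \bGa (\B^{-1})^{\top}$, the identities $\tr(\A^{1/2} \M \A^{1/2}) = \sum_i a_i M_{i,i}$ and $\tr(\A^{-1/2} \overline{\M} \A^{-1/2}) = \sum_i a_i^{-1} \overline{M}_{i,i}$ show the density factorizes across $i$, each factor proportional to $a_i^{\la + p - i} \exp[-(M_{i,i} a_i + \overline{M}_{i,i} a_i^{-1})/2]$, which is exactly the kernel of ${\rm{GIG}}(\la + p - i + 1, M_{i,i}, \overline{M}_{i,i})$.

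The substance is in part (iii). I would first record that $\B = \B_1 \dotsm \B_{p-1}$ and, since $\overline{\B}_j = \B_j^{-1}$, that $\B^{-1} = \overline{\B}_{p-1} \dotsm \overline{\B}_1$; crucially, $\B$ depends on $\b_i$ only through $\B_i$ and $\B^{-1}$ only through $\overline{\B}_i$. Factoring $\B = (\B_1 \dotsm \B_{i-1}) \B_i (\B_{i+1} \dotsm \B_p)$ and cycling the trace, the first exponential term becomes $\tr(\B_i \R_i \R_i^{\top} \B_i^{\top} \M_i)$, the left factors collapsing into $\M_i$ and the right into $\R_i = \B_{i+1} \dotsm \B_p \A^{1/2}$; the analogous rearrangement of $\B^{-1}$ turns the second term into $\tr(\overline{\B}_i^{\top} \overline{\R}_i \overline{\R}_i^{\top} \overline{\B}_i \overline{\M}_i)$ with $\overline{\M}_i$ and $\overline{\R}_i$ as defined. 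Substituting $\B_i = \I + \v_i \e_i^{\top}$ and $\overline{\B}_i = \I - \v_i \e_i^{\top}$, where $\v_i = (\bm{0}^{(i)} , \b_i^{\top})^{\top}$, each trace expands into a constant, a linear, and a quadratic term in $\b_i$, confirming the conditional law is Gaussian. The quadratic parts give precision $(\R_i \R_i^{\top})_{i,i} (\M_i)_{(i+1):p,(i+1):p} + (\overline{\M}_i)_{i,i} (\overline{\R}_i \overline{\R}_i^{\top})_{(i+1):p,(i+1):p}$, which I would simplify using three block identities: columns $1,\dots,i$ of $\B_{i+1} \dotsm \B_p$ equal $\e_1,\dots,\e_i$, forcing $(\R_i \R_i^{\top})_{i,i} = a_i$; columns $i,\dots,p$ of $\B_1 \dotsm \B_{i-1}$ equal $\e_i,\dots,\e_p$, forcing $(\M_i)_{(i+1):p,(i+1):p} = (\bPsi)_{(i+1):p,(i+1):p}$; and the lower-right $(p-i) \times (p-i)$ block of $(\B^{-1})^{\top} \A^{-1} \B^{-1}$ coincides with that of $\overline{\R}_i \overline{\R}_i^{\top}$, because $\overline{\B}_i \overline{\B}_{i-1} \dotsm \overline{\B}_1$ acts as the identity on that block. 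Together these recover $\N_i$; collecting the linear parts and completing the square recovers the stated mean, where the surviving factor $\R_i \R_i^{\top} \e_i = \R_i ((\R_i)_{i,1:p})^{\top}$ matches the form written in the theorem.

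The main obstacle I anticipate is the bookkeeping in part (iii): verifying that the products telescope exactly into the recursively defined $\M_i, \R_i, \overline{\M}_i, \overline{\R}_i$, and in particular handling the $\bGa$ term, whose dependence on $\b_i$ is only implicit through $\B^{-1}$. The resolution -- and the reason the $\overline{\B}_i$ are introduced -- is the factorization $\B^{-1} = \overline{\B}_{p-1} \dotsm \overline{\B}_1$ with $\overline{\B}_i = \B_i^{-1}$, which isolates the $\b_i$-dependence of $\bSi^{-1}$ into the single factor $\overline{\B}_i$ and makes the block-triangular simplifications transparent. Everything else reduces to routine trace manipulation and Gaussian completion of the square.
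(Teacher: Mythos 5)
Your proposal follows the same route as the paper's proof: the change of variables to $(\a ,\b )$, reading off the independent GIG full conditionals of $\a $, and isolating the $\b _i$-dependence through the factorizations $\B = \B _1 \dotsm \B _p$ and $\B ^{-1} = \overline{\B }_p \dotsm \overline{\B }_1$. Your part (iii) is correct and essentially identical to the paper's argument: writing $\B _i = \I + \v _i \e _i^{\top }$ and $\overline{\B }_i = \I - \v _i \e _i^{\top }$ and expanding the two traces is a compact version of the paper's column-by-column expansion of $\tr ( \R _i^{\top } \B _i^{\top } \M _i \B _i \R _i )$ and $\tr ( \overline{\R }_i^{\top } \overline{\B }_i \overline{\M }_i \overline{\B }_i^{\top } \overline{\R }_i )$, and your three block identities (row $i$ of $\R _i$ equals $\sqrt{a_i}\, \e _i^{\top }$; the trailing block of $\M _i$ equals that of $\bPsi $; the trailing block of $\overline{\R }_i \overline{\R }_i^{\top }$ equals that of $( \B ^{-1} )^{\top } \A ^{-1} \B ^{-1}$) are exactly the three simplifications carried out in the paper. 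Part (ii) also matches.

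The one concrete flaw is the Jacobian bookkeeping in part (i). With $\T = \B \A ^{1/2}$, the off-diagonal entries are $( \T )_{h, j} = b_{h, j} \sqrt{a_j}$ for $h > j$, so the map $( \a , \b ) \mapsto \T $ is not just the diagonal rescaling $( \T )_{j, j} = \sqrt{a_j}$: each column $j$ contributes an additional factor $( \sqrt{a_j} )^{p - j}$ from the $b$-entries, so the Jacobian of $( \a , \b ) \mapsto \T $ is $2^{-p} \prod_{j} a_j^{(p - j - 1)/2}$ rather than $2^{-p} \prod_{j} a_j^{-1/2}$. As you wrote it --- the Cholesky Jacobian $2^p \prod_{j} ( \T )_{j, j}^{p - j + 1}$ combined only with $d a_j = 2 \sqrt{a_j}\, d ( \T )_{j, j}$ --- the computation yields $\prod_{j} a_j^{(p - j)/2}$, which is half the exponent you need; the missing factor $\prod_{j} a_j^{(p - j)/2}$ is precisely the contribution of the off-diagonal substitution. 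With that factor restored the total Jacobian is $\prod_{j} a_j^{p - j}$, and part (i) follows as stated. (The paper itself asserts this Jacobian without derivation, so your attempt to supply it is welcome; it just needs the complete accounting.)
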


\section{Proof of Theorem \ref{thm:conditional_density}}

\begin{proof}%
For part (i), 
\begin{align}
p( \a , \b ) &\propto %
\Big( \prod_{i = 1}^{p} {a_i}^{\la + p - i} \Big) \exp [- \tr \{ \bPsi \B \A \B ^{\top } + \bGa ( \B \A \B ^{\top } )^{- 1} \} / 2] \non \\
&= \Big( \prod_{i = 1}^{p} {a_i}^{\la + p - i} \Big) \exp [- \tr \{ \A ^{1 / 2} \B ^{\top } \bPsi \B \A ^{1 / 2} + \A ^{- 1 / 2} \B ^{- 1} \bGa ( \B ^{- 1} )^{\top } \A ^{- 1 / 2} \} / 2] \text{.} \non 
\end{align}
For part (ii), 
\begin{align}
p( \a | \b ) &\propto \prod_{i = 1}^{p} ( {a_i}^{\la + p - i} \exp [- \{ ( \B ^{\top } \bPsi \B )_{i, i} a_i + ( \B ^{- 1} \bGa ( \B ^{- 1} )^{\top } )_{i, i} / a_i \} / 2] ) \non \\
&\propto \prod_{i = 1}^{p} {\rm{GIG}} ( a_i | \la + p - i + 1, ( \B ^{\top } \bPsi \B )_{i, i} , ( \B ^{- 1} \bGa ( \B ^{- 1} )^{\top } )_{i, i} ) \text{.} \non 
\end{align}
For part (iii), note that 
\begin{align}
\B = \B _1 \dotsm \B _p \text{.} \non
\end{align}
Since $\B_i^{-1} = \overline{\B}_i$ for all $i$, we also have 
\begin{align}
\B ^{- 1} = \overline{\B }_p \dotsm \overline{\B } _1 \text{.} \non 
\end{align}
Fix $i = 1, \dots , p - 1$. 
Then 
\begin{align}
p( \b _i | \a , \b _{- i} ) %
&\propto \exp \{ - \tr ( {\R _i}^{\top } {\B _i}^{\top } \M _i \B _i \R _i ) / 2 - \tr ( {\overline{\R }_i}^{\top } \overline{\B } _i \overline{\M } _i {\overline{\B } _i}^{\top } \overline{\R } _i ) / 2 \} \text{.} \non 
\end{align}
Now we write the column vectors of $\R_i$ and $\overline{\R}_i$ as $( \r _{i, 1} , \dots , \r _{i, p} ) = \R _i$ and $( \overline{\r } _{i, 1} , \dots , \overline{\r } _{i, p} ) = \overline{\R } _i$. Note that the $k$-th element of vector $\r_{i,j}$ is also written as $(\R_i)_{k,j}$. 
Then 
\begin{align}
p( \b _i | \a , \b _{- i} ) &\propto \exp \Big\{ - {1 \over 2} \Big( \sum_{j = 1}^{p} {\r _{i, j}}^{\top } {\B _i}^{\top } \M _i \B _i \r _{i, j} + \sum_{j = 1}^{p} {\overline{\r } _{i, j}}^{\top } \overline{\B } _i \overline{\M } _i {\overline{\B } _i}^{\top } \overline{\r } _{i, j} \Big) \Big\} \text{.} \non 
\end{align}
To write the density above as the function of $\b_i$, observe that 
\begin{align}
\B _i \r _{i, j} = \r _{i, j} + (\R _i )_{i, j} \begin{pmatrix} \bm{0} ^{(i)} \\ \b_i \end{pmatrix} = \begin{pmatrix} \r _{i, j} & \begin{pmatrix} \O ^{(i, p - i)} \\ ( \R _i )_{i, j} \I ^{(p - i)} \end{pmatrix} \end{pmatrix} \begin{pmatrix} 1 \\ \b _i \end{pmatrix} \non 
\end{align}
and 
\begin{align}
{\overline{\B } _i}^{\top } \overline{\r } _{i, j} = \begin{pmatrix} \overline{\r } _{i, j} & \begin{pmatrix} \O ^{(i - 1, p - i)} \\ - \{ ( \overline{\R } _i )_{(i + 1):p, j} \} ^{\top } \\ \O ^{(p - i, p - i)} \end{pmatrix} \end{pmatrix} \begin{pmatrix} 1 \\ \b _i \end{pmatrix} \text{.} \non 
\end{align}
Then, for all $j = 1, \dots , p$, we have 
\begin{align}
p( \b _i | \a , \b _{- i} ) &\propto \exp \Big( - {1 \over 2} \begin{pmatrix} 1 & {\b _i}^{\top } \end{pmatrix} \begin{pmatrix} n_i & - {\n _i}^{\top } \\ - \n _i & \N _i \end{pmatrix} \begin{pmatrix} 1 \\ \b _i \end{pmatrix} \Big) \non \\
&\propto {\rm{N}}_{p - i} ( \b _i | {\N _i}^{- 1} \n _i , {\N _i}^{- 1} ) \text{,} \non 
\end{align}
where 
\begin{align}
\begin{pmatrix} n_i & - {\n _i}^{\top } \\ - \n _i & \N _i \end{pmatrix} &= \sum_{j = 1}^{p}  \begin{pmatrix} {\r _{i, j}}^{\top } \\ \begin{pmatrix} \O ^{(p - i, i)} & ( \R _i )_{i, j} \I ^{(p - i)} \end{pmatrix} \end{pmatrix} \M _i \begin{pmatrix} \r _{i, j} & \begin{pmatrix} \O ^{(i, p - i)} \\ ( \R _i )_{i, j} \I ^{(p - i)} \end{pmatrix} \end{pmatrix} \non \\
&\quad + \begin{pmatrix} {\overline{\r } _{i, j}}^{\top } \\ \begin{pmatrix} \O ^{(p - i, i - 1)} & - ( \overline{\R } _i )_{(i + 1):p, j} & \O ^{(p - i, p - i)} \end{pmatrix} \end{pmatrix} \overline{\M } _i \begin{pmatrix} \overline{\r } _{i, j} & \begin{pmatrix} \O ^{(i - 1, p - i)} \\ - \{ ( \overline{\R } _i )_{(i + 1):p, j} \} ^{\top } \\ \O ^{(p - i, p - i)} \end{pmatrix} \end{pmatrix} \text{.} \non 
\end{align}
Here, 
\begin{align}
- \n _i &= \sum_{j = 1}^{p} \Big( \begin{pmatrix} \O ^{(p - i, i)} & ( \R _i )_{i, j} \I ^{(p - i)} \end{pmatrix} \M _i \r _{i, j} + \begin{pmatrix} \O ^{(p - i, i - 1)} & - ( \overline{\R } _i )_{(i + 1):p, j} & \O ^{(p - i, p - i)} \end{pmatrix} \overline{\M } _i \overline{\r } _{i, j} \Big) \non \\
&= \sum_{j = 1}^{p} \Big( ( \R _i )_{i, j} ( \M _i )_{(i + 1):p, 1:p} \r _{i, j} - ( \overline{\R } _i )_{(i + 1):p, j} ( \overline{\M } _i )_{i, 1:p} \overline{\r } _{i, j} \Big) \non \\
&= \sum_{j = 1}^{p} \Big\{ ( \M _i )_{(i + 1):p, 1:p} \r _{i, j} ( \R _i )_{i, j} - ( \overline{\R } _i )_{(i + 1):p, j} {\overline{\r } _{i, j}}^{\top } ( {( \overline{\M } _i )_{i, 1:p}} )^{\top } \Big\} \non \\
&= ( \M _i )_{(i + 1):p, 1:p} \R _i (( \R _i )_{i, 1:p} )^{\top } - ( \overline{\R } _i )_{(i + 1):p, 1:p} {\overline{\R } _i}^{\top } ( {( \overline{\M } _i )_{i, 1:p}} )^{\top } \text{.} \non 
\end{align}
Meanwhile, 
\begin{align}
\N _i &= \sum_{j = 1}^{p} \Big[ \begin{pmatrix} \O ^{(p - i, i)} & ( \R _i )_{i, j} \I ^{(p - i)} \end{pmatrix} \M _i \begin{pmatrix} \O ^{(i, p - i)} \\ ( \R _i )_{i, j} \I ^{(p - i)} \end{pmatrix} \non \\
&\quad + \begin{pmatrix} \O ^{(p - i, i - 1)} & - ( \overline{\R } _i )_{(i + 1):p, j} & \O ^{(p - i, p - i)} \end{pmatrix} \overline{\M } _i \begin{pmatrix} \O ^{(i - 1, p - i)} \\ - \{ ( \overline{\R } _i )_{(i + 1):p, j} \} ^{\top } \\ \O ^{(p - i, p - i)} \end{pmatrix} \Big] \non \\
&= \sum_{j = 1}^{p} \Big[ (( \R _i )_{i, j} )^2 ( \M _i )_{(i + 1):p, (i + 1):p} + ( \overline{\M } _i )_{i, i} ( \overline{\R } _i )_{(i + 1):p, j} \{ ( \overline{\R } _i )_{(i + 1):p, j} \} ^{\top } \Big] \non \\
&= \| ( \R _i )_{i, 1:p} \| ^2 ( \M _i )_{(i + 1):p, (i + 1):p} + ( \overline{\M } _i )_{i, i} ( \overline{\R } _i )_{(i + 1):p, 1:p} \{ ( \overline{\R } _i )_{(i + 1):p, 1:p} \} ^{\top } \text{.} \non 
\end{align}
We have 
\begin{align}
\| ( \R _i )_{i, 1:p} \| ^2 &= %
\Big\| ( \e _{i}^{(p)} )^{\top } \begin{pmatrix} \I ^{(i)} & \O ^{(i, p - i)} \\ \O ^{(p - i, i)} & ( \B )_{(i + 1):p, (i + 1):p} \end{pmatrix} \A ^{1 / 2} \Big\| ^2 = a_i \non 
\end{align}
and 
\begin{align}
&( \M _i )_{(i + 1):p, (i + 1):p} \non \\
&= \begin{pmatrix} \O ^{(p - i, i)} & \I ^{(p - i)} \end{pmatrix} \begin{pmatrix} (( \B )_{1:p, 1:(i - 1)} )^{\top } \\ \begin{pmatrix} \O ^{(p - i + 1, i - 1)} & \I ^{(p - i + 1)} \end{pmatrix} \end{pmatrix} \bPsi \begin{pmatrix} ( \B )_{1:p, 1:(i - 1)} & \begin{pmatrix} \O ^{(i - 1, p - i + 1)} \\ \I ^{(p - i + 1)} \end{pmatrix} \end{pmatrix} \begin{pmatrix} \O ^{(i, p - i)} \\ \I ^{(p - i)} \end{pmatrix} \non \\
&= ( \bPsi )_{(i + 1):p, (i + 1):p} \text{.} \non 
\end{align}
Furthermore, %
\begin{align}
&( \overline{\R } _i )_{(i + 1):p, 1:p} \{ ( \overline{\R } _i )_{(i + 1):p, 1:p} \} ^{\top } \non \\
&= \begin{pmatrix} \O ^{(p - i, i)} & \I ^{(p - i)} \end{pmatrix} {\overline{\B } _{i + 1}}^{\top } \dotsm {\overline{\B } _p}^{\top } \A ^{- 1} \overline{\B } _p \dotsm \overline{\B } _{i + 1} \begin{pmatrix} \O ^{(i, p - i)} \\ \I ^{(p - i)} \end{pmatrix} \non \\
&= \begin{pmatrix} \O ^{(p - i, i)} & \I ^{(p - i)} \end{pmatrix} ( \B ^{- 1} )^{\top } \A ^{- 1} \B ^{- 1} \begin{pmatrix} \O ^{(i, p - i)} \\ \I ^{(p - i)} \end{pmatrix} \non \\
&= (( \B ^{- 1} )^{\top } \A ^{- 1} \B ^{- 1} )_{(i + 1):p, (i + 1):p} \text{.} \non 
\end{align}
Therefore, 
\begin{align}
\N _i &= a_i ( \bPsi ) _{(i + 1):p, (i + 1):p} + ( \overline{\M } _i )_{i, i} (( \B ^{- 1} )^{\top } \A ^{- 1} \B ^{- 1} )_{(i + 1):p, (i + 1):p} \text{.} \non 
\end{align}
This completes the proof. 
\end{proof}

\section{Average acceptance rate of the first MH method}

In this section, we compute the limit of the average acceptance rate when using the first MH method (MH1) in the two examples in the main text. The average acceptance rate is defined as 
\begin{align}
{\rm{AAR}}(\la,\bPsi,\bGa) = 2 \mathbb{P}[ \tr ( \bGa {\bSi _{\rm{new}}}^{- 1} ) \le \tr ( \bGa {\bSi _{\rm{old}}}^{- 1} ) ] \text{,} \non 
\end{align}
where $\bSi _{\rm{new}}$ and $\bSi _{\rm{old}}$ are independent and have densities
\begin{align}
&p(\bSi _{\rm{new}}) \propto | \bSi |^{\la } \etr (- \bPsi \bSi / 2) \text{,} \non \\
&p(\bSi _{\rm{old}}) \propto | \bSi |^{\la } \etr \{ - ( \bPsi \bSi + \bGa \bSi ^{- 1} ) / 2 \} \text{,} \non 
\end{align}
and we assume $\la > -1$ so that $p(\bSi _{\rm{old}})$ becomes a proper probability density.

\subsection{Example~1: Small and large $\la$}

Although we stated in the main text that $\bPsi$ and $\bGa$ are arbitrary, in the proofs below, we set either $\bPsi$ or $\bGa$ to $2\I^{(p)}$ without loss of generality; see the discussions in Section~2.1 in the main text. 

\begin{prp}
\label{prp:aar_la_0} 
Suppose without loss of generality that $\bPsi / 2 = \bdiag ( \psi _1 , \dots , \psi _p )$ and $\bGa / 2 = \I ^{(p)}$. 
Then, as $\la \to - 1$, the average acceptance rate converges to $0$. 
\end{prp}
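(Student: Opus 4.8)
The plan is to exploit the fundamental asymmetry between the two distributions as $\la \to -1$. The proposal $\bSi_{\rm new} \sim {\rm W}_p(2\la+p+1, \bPsi^{-1})$ degenerates, since its degrees of freedom $\nu = 2\la+p+1$ tend to the singular boundary $p-1$, forcing $\tr(\bSi_{\rm new}^{-1})$ to explode; meanwhile the target $\bSi_{\rm old}\sim {\rm MGIG}_p(\la,\bPsi,\bGa)$ stays a well-behaved proper distribution whose $\tr(\bSi_{\rm old}^{-1})$ remains tight. Because $\bGa/2 = \I^{(p)}$, the constant cancels inside the event, so ${\rm AAR} = 2\mathbb{P}[\tr(\bSi_{\rm new}^{-1}) \le \tr(\bSi_{\rm old}^{-1})]$ and it suffices to show this probability tends to $0$. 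The union bound gives, for every $M > 0$,
\[
\mathbb{P}[\tr(\bSi_{\rm new}^{-1}) \le \tr(\bSi_{\rm old}^{-1})] \le \mathbb{P}[\tr(\bSi_{\rm new}^{-1}) \le M] + \mathbb{P}[\tr(\bSi_{\rm old}^{-1}) > M],
\]
so I would reduce the claim to (A) a tightness bound on $\tr(\bSi_{\rm old}^{-1})$ that is uniform for $\la$ near $-1$, and (B) the divergence $\tr(\bSi_{\rm new}^{-1}) \to \infty$ in probability.

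For (B), I would discard all but the $(p,p)$ diagonal entry via $\tr(\bSi_{\rm new}^{-1}) \ge (\bSi_{\rm new}^{-1})_{pp}$. The classical Schur-complement (conditional-variance) property of the Wishart distribution gives, since $\bPsi$ is diagonal, $1/(\bSi_{\rm new}^{-1})_{pp} \sim (2\psi_p)^{-1}\chi^2_{\nu-p+1}$, where $\nu-p+1 = 2(\la+1)$. As $\la \to -1$ the chi-square degrees of freedom collapse to $0$ and $\mathbb{E}[\chi^2_{2(\la+1)}] = 2(\la+1)\to 0$, so $\chi^2_{2(\la+1)} \to 0$ in probability by Markov's inequality. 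Hence $(\bSi_{\rm new}^{-1})_{pp} \to \infty$ and $\mathbb{P}[\tr(\bSi_{\rm new}^{-1}) \le M] \to 0$ for each fixed $M$.

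For (A), I would pass to $\T = \bSi_{\rm old}^{-1}$, which by the inversion property recalled in Section~2.1 follows ${\rm MGIG}_p(-\la-p-1, \bGa, \bPsi)$, with density proportional to $|\T|^{-\la-p-1}\exp\{-\tr(\T) - \tr(\bPsi\T^{-1})/2\}$. Here the factor $e^{-\tr(\T)}$ suppresses large $\T$ while $e^{-\tr(\bPsi\T^{-1})/2}$ suppresses near-singular $\T$, so I would show $\mathbb{E}[\tr(\T)]$ stays bounded as $\la \to -1$ and then invoke Markov to obtain the tightness bound uniformly in $\la$. Boundedness of this first moment follows by dominated convergence: on a neighbourhood $\la \in (-1, -1+\delta)$ the exponent $-\la-p-1$ lies in $(-p-\delta, -p)$, and $|\T|^{-\la-p-1} \le |\T|^{-p-\delta} + |\T|^{-p}$ supplies a single $\la$-free integrable dominating factor, integrability being exactly the properness of the MGIG distribution plus one extra trace moment, both guaranteed by the two exponential factors. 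Thus the numerator and denominator of the ratio defining $\mathbb{E}[\tr(\T)]$ converge to finite positive limits, and the ratio is bounded near $\la = -1$.

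Combining the two pieces finishes the argument: given $\varepsilon > 0$, choose $M$ so that (A) makes the second term below $\varepsilon/2$ uniformly for $\la$ near $-1$, then let $\la \to -1$ in (B) to push the first term below $\varepsilon/2$. I expect the main obstacle to be the uniformity near the boundary $\la = -1$: one must ensure the tightness in (A) holds uniformly in $\la$ rather than merely pointwise, which is why I would route it through a bounded-first-moment estimate with an explicit $\la$-free dominating function instead of a bare weak-convergence argument, and one must treat the collapsing chi-square in (B) with care precisely because its degrees of freedom vanish in the limit.
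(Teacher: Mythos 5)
Your proof is correct and follows essentially the same route as the paper's: both arguments reduce to a single diagonal entry of the inverted Wishart proposal, whose reciprocal is Gamma-distributed with shape $\la + 1 \to 0$ (the paper phrases this as $( \bSi_{\rm{new}}^{-1} )_{1,1} \sim {\rm{IG}}( \la + 1, \psi_1 )$, you as $\chi^2_{2(\la+1)}/(2\psi_p)$), and both control $\mathbb{E}[ \tr ( \bSi_{\rm{old}}^{-1} ) ]$ near $\la = -1$ by a dominated-convergence argument on the MGIG density of the inverse with a $\la$-free dominating function. The only difference is bookkeeping: the paper fuses the two ingredients via a conditional Markov inequality into the single quantitative bound ${\rm{AAR}} \le 2\, \mathbb{E}[ \tr ( \bSi_{\rm{old}}^{-1} ) ] ( \la + 1)/\psi_1$, whereas you use a union bound over a threshold $M$ plus an $\varepsilon$-argument, which is an equivalent (if slightly less direct) way of combining the same two facts.
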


\begin{proof}
Note that 
\begin{align}
( {\bSi _{\rm{new}}}^{- 1} )_{i, i} \sim {\rm{IG}} ( \la + 1, \psi _{i, i} ) \non 
\end{align}
for all $i = 1, \dots , p$ (see, for example, \cite{wwc2018}). 
Since $( {\bSi _{\rm{new}}}^{- 1} )_{1, 1} \le \tr ( \bSi _{\rm{new}} ^{- 1} )$, we have
\begin{align}
{\rm{AAR}}(\la,\bPsi,\bGa) &\le 2 \mathbb{P}[ ( {\bSi _{\rm{new}}}^{- 1} )_{1, 1} \le \tr ( {\bSi _{\rm{old}}}^{- 1} ) ] \non \\
&\le %
2 \mathbb{E}[ \tr ( {\bSi _{\rm{old}}}^{- 1} ) ] ( \la + 1) / \psi _{1, 1} \text{,} \non 
\end{align}
where the second inequality follows from the conditional Markov's inequality. Here the expectation, $\mathbb{E}[ \tr ( {\bSi _{\rm{old}}}^{- 1} ) ]$, depends on $\la$, but converges to $\mathbb{E}[ \tr ( \bSih^{- 1} ) ]$ as $\la \downarrow -1$, where $\bSih \sim {\rm{MGIG}}_p (- 1, \bPsi , \bGa )$. Note that $\mathbb{E}[ \bSih^{- 1} ]$ is shown to be finite. Thus, ${\rm{AAR}}(\la,\bPsi,\bGa) \to 0$ as $\la \downarrow - 1$. 

To see that $\mathbb{E}[ \tr ( \bSih^{- 1} ) ]$ is finite, use the dominated convergence theorem as follows. Let $\la \in \mathbb{R}$ and let $\bPsi , \bGa > \O ^{(p)}$. The density of $\bSi$ is proportional to $| \bSi |^{\la } \etr (- \bPsi \bSi - \bGa \bSi ^{- 1} )$. 
Choose $c > 0$ such that $c \I ^{(p)} < \bPsi , \bGa $. 
Then, using the nomralizing constant $c_p(\la,\bPsi,\bGa)$, we have 
\begin{align}
c_p(\la , \bPsi , \bGa )  E[ \tr ( \bSi ) ] &\le \int ( \tr \bSi ) | \bSi |^{\la } \etr (- c \bSi - c \bSi ^{- 1} ) d\bSi \non \\
&\le \int ( \tr \bSi ) (| \bSi |^{| \la |} + 1 / | \bSi |^{| \la |} ) \etr (- c \bSi - c \bSi ^{- 1} ) d\bSi \non \\
&\le \Big[ \sup_{\de _1 , \dots , \de _p > 0} \Big\{ \Big( \prod_{i = 1}^{p} {\de _i}^{| \la |} + \prod_{i = 1}^{p} {1 \over {\de _i}^{| \la |}} \Big) \prod_{i = 1}^{p} \exp \Big( - {c \over 2} \de _i - {c \over 2} {1 \over \de _i} \Big) \Big\} \Big] \non \\
&\quad \times \int ( \tr \bSi ) \etr \Big( - {c \over 2} \bSi - {c \over 2} \bSi ^{- 1} \Big) d\bSi < \infty \text{.} \non 
\end{align}
Thus, the trace of any MGIG-distributed matrix has a finite mean. 
\end{proof}

\begin{prp}
\label{prp:aar_la_1} 
Suppose without loss of generality that $\bPsi / 2 = \I ^{(p)}$ and $\bGa / 2 = \bdiag ( \ga _1 , \dots , \ga _p )$. 
Then, as $\la \to \infty $, the average acceptance rate converges to $1$. 
\end{prp}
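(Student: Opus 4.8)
The plan is to use the identity already recorded in the main text, ${\rm AAR}(\la,\bPsi,\bGa) = 2\,\mathbb{P}[\tr(\bGa\bSi_{\mathrm{new}}^{-1})\le\tr(\bGa\bSi_{\mathrm{old}}^{-1})]$, and to prove that the probability on the right converges to $1/2$. The structural fact driving everything is that the proposal and the target differ only through the factor the Metropolis ratio sees. Writing $q(\bSi)\propto|\bSi|^{\la}\etr(-\bPsi\bSi/2)$ for the ${\rm W}_p(2\la+p+1,\bPsi^{-1})$ density (whose shape parameter $(\nu-p-1)/2=\la$ and scale $\bPsi^{-1}$ match exactly) and $\pi(\bSi)$ for the MGIG density, one has $\pi(\bSi)=q(\bSi)\,e^{-\tr(\bGa\bSi^{-1})/2}/Z$ with $Z=\mathbb{E}_q[e^{-\tr(\bGa\bSi^{-1})/2}]$. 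Hence the whole question is governed by how close $\pi$ is to $q$ as $\la\to\infty$.

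The core step is to show $d_{\mathrm{TV}}(\pi,q)\to 0$. Set $W=\tr(\bGa\bSi^{-1})/2\ge 0$. Using the inverse-Wishart moment $\mathbb{E}_q[\bSi^{-1}]=\bPsi/(2\la)$, valid once $\la>0$, I obtain $\mathbb{E}_q[W]=\tr(\bGa\bPsi)/(4\la)\to 0$, so $W\to 0$ in probability under $q$. Since $0<e^{-W}\le 1$, bounded convergence gives $Z\to 1$, and then $d_{\mathrm{TV}}(\pi,q)=\tfrac12\mathbb{E}_q[\,|e^{-W}/Z-1|\,]\to 0$ by a second bounded-convergence argument. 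Because total variation contracts under the pushforward by $h(\bSi)=\tr(\bGa\bSi^{-1})$, the laws of $h$ under $\pi$ and under $q$ also satisfy $d_{\mathrm{TV}}(\mathrm{Law}_\pi(h),\mathrm{Law}_q(h))\to 0$.

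To finish, let $\bSi_{\mathrm{new}}\sim q$ and compare $\bSi_{\mathrm{old}}\sim\pi$ against an auxiliary independent draw $\bSi'\sim q$. For each fixed $w$ one has $|\mathbb{P}_\pi[h\ge w]-\mathbb{P}_q[h\ge w]|\le d_{\mathrm{TV}}(\mathrm{Law}_\pi(h),\mathrm{Law}_q(h))$ uniformly in $w$, so conditioning on $h(\bSi_{\mathrm{new}})$ yields $|\mathbb{P}[h(\bSi_{\mathrm{new}})\le h(\bSi_{\mathrm{old}})]-\mathbb{P}[h(\bSi_{\mathrm{new}})\le h(\bSi')]|\to 0$. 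But $h(\bSi_{\mathrm{new}})$ and $h(\bSi')$ are i.i.d.\ with an atomless law (a nonconstant real-analytic function of an absolutely continuous positive definite matrix), so by symmetry $\mathbb{P}[h(\bSi_{\mathrm{new}})\le h(\bSi')]=1/2$ exactly. Therefore $\mathbb{P}[h(\bSi_{\mathrm{new}})\le h(\bSi_{\mathrm{old}})]\to 1/2$ and ${\rm AAR}\to 1$.

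The main obstacle, and the reason I favor this total-variation route, is that the natural alternative of rescaling $\la\,h$ and showing that $h(\bSi_{\mathrm{new}})$ and $h(\bSi_{\mathrm{old}})$ share the same Gaussian fluctuations runs into the correlation of the diagonal entries of $\bSi_{\mathrm{new}}^{-1}$ under the Wishart law, which blocks a clean sum-of-independent-variables central limit theorem and forces a delicate matching of limiting variances on both sides. The reweighting argument avoids the limiting distribution altogether, reducing everything to the single estimate $\mathbb{E}_q[W]\to 0$; the only points deserving care are the two bounded-convergence interchanges and the atomlessness of $\mathrm{Law}_q(h)$ needed to pin the probability at exactly $1/2$.
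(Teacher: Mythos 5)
Your proof is correct, and it takes a genuinely different route from the paper's. The paper proves this proposition via the Yuan--Bentler eigendecomposition of a positive definite matrix: it rescales the eigenvalues as $\de_i = \la(1+\xi_i/\sqrt{\la})$, uses explicit inequalities (its Lemma~S1) plus dominated convergence to show that the rescaled densities of $\bSi_{\rm{new}}$ (Wishart) and $\bSi_{\rm{old}}$ (MGIG) converge pointwise to one and the same limiting density, and then concludes by the same exchangeability/symmetry observation you use, namely that the limiting ordering probability is exactly $1/2$. You instead exploit the exact density-ratio structure $\pi(\bSi)=q(\bSi)e^{-W}/Z$ with $W=\tr(\bGa\bSi^{-1})/2$, reduce everything to the single moment computation $\mathbb{E}_q[W]=\tr(\bGa\bPsi)/(4\la)\to 0$, and deduce $d_{\mathrm{TV}}(\pi,q)\to 0$ before invoking symmetry and atomlessness. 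Your route is shorter, avoids the change of variables and the delicate domination bounds entirely, is valid for arbitrary fixed positive definite $\bPsi,\bGa$ (no diagonal normalization needed), and is quantitative: it gives the rate $d_{\mathrm{TV}}(\pi,q)\le (1-Z)/Z \le \tr(\bGa\bPsi)/\{4\la-\tr(\bGa\bPsi)\}=O(1/\la)$, hence $|{\rm{AAR}}-1|=O(1/\la)$, which the paper's qualitative limit argument does not provide. What the paper's heavier machinery buys in exchange is distributional information about the $\la\to\infty$ limit (Gaussian fluctuations of the rescaled eigenvalues), and its technical apparatus is of the same kind reused in the companion Proposition on large $\bPsi$. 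One small imprecision to fix in your write-up: both invocations of ``bounded convergence'' are not literally applications of that theorem, since the reference measure $q=q_\la$ itself moves with $\la$; but no limit theorem is needed, because $1-Z=\mathbb{E}_q[1-e^{-W}]\le\mathbb{E}_q[W]\to 0$ by the elementary inequality $1-e^{-x}\le x$, and then $\mathbb{E}_q[|e^{-W}/Z-1|]\le 2(1-Z)/Z\to 0$ directly. With that wording repaired, the argument is complete.
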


\begin{proof}
In the following proof, we utilize the singular value decomposition of the positive definite random matrix. 
The change-of-variable for the MGIG distributed matrix is provided in Lemma 2 in \cite{yb1994}, which we review here. 
There exist functions $J \colon (- \pi / 2, \pi / 2)^{p (p - 1) / 2} \to (0, \infty )$ and $\bOm \colon (- \pi / 2, \pi / 2)^{p (p - 1) / 2} \to \mathbb{R} ^{p \times p}$ satisfying $J( \bom ) \le 1$, $\{ \bOm ( \bom ) \} ^{\top } \bOm ( \bom ) = \I ^{(p)}$ for all $\bom \in (- \pi / 2, \pi / 2)^{p (p - 1) / 2}$, and the following condition; if $\bde = ( \de _i )_{i = 1}^{p} \in (0, \infty )^p$ and $\bom$ are random variables and have the joint probability density,
\begin{align}
p( \bde , \bom ; \la ; \bGa ) &\propto J( \bom ) \Big\{ \prod_{1 \le i < j \le p} ( \de _i - \de _j ) \Big\} \non \\
&\quad \times \Big[ \prod_{i = 1}^{p} \{ {\de _i}^{\la } \exp (- \de _i ) \} \Big] \etr [- \bGa \{ \bOm ( \bom ) \} ^{\top } \{ \bDe ( \bde ) \} ^{- 1} \bOm ( \bom )] \mathbbm{1}( \de _1 > \dots > \de _p ) \text{,} \non 
\end{align}
where $\bDe ( \bde ) = \bdiag ( \de _1 , \dots , \de _p )$, %
then $\bSi = \{ \bOm ( \bom ) \} ^{\top } \bDe ( \bde ) \bOm ( \bom )$ follows the MGIG distribution with density proportional to $| \bSi |^{\la } \etr (- \bSi - \bGa \bSi ^{- 1} )$.
In using the lemma above, we set $\de _i = \la (1 + \xi _i / \sqrt{\la } )$. For $\xi _i \in (-\sqrt{\la},\infty)$, this is clearly one-to-one. By the change of variables, we have 
\begin{align}
p( \bxi , \bom ; \la ; \bGa ) &\propto g( \bxi , \bom ; \la ; \bGa ) \non \\
&= J( \bom ) \Big\{ \prod_{1 \le i < j \le p} ( \xi _i - \xi _j ) \Big\} \non \\
&\quad \times \etr \Big[ - {1 \over \la } \bGa \{ \bOm ( \bom ) \} ^{\top } \Big( \bdiag \Big( {1 \over 1 + \xi _1 / \sqrt{\la }}, \dots , {1 \over 1 + \xi _p / \sqrt{\la }} \Big) \Big) \bOm ( \bom ) \Big] \non \\
&\quad \times \Big( \prod_{i = 1}^{p} \exp \Big[ - \la \Big\{ {\xi _i \over \sqrt{\la }} - \log \Big( 1 + {\xi _i \over \sqrt{\la }} \Big) \Big\} \Big] \Big) \mathbbm{1}( \xi _1 > \dots > \xi _p > - \sqrt{\la } ) \text{,} \non 
\end{align}
where $\bxi = ( \xi _i )_{i = 1}^{p} \in \mathbb{R} ^p$. 
Then, we rewrite the AAR as the integral below: 
\begin{align}
{\rm{AAR}} &= 2 \mathbb{P}[ \tr ( \bGa {\bSi _{\rm{new}}}^{- 1} ) \le \tr ( \bGa {\bSi _{\rm{old}}}^{- 1} ) ] \non \\
&= 2 \mathbb{P}[  \sqrt{\la } \{ \tr \bGa - \la \tr ( \bGa {\bSi _{\rm{new}}}^{- 1} ) \} \ge \sqrt{\la } \{ \tr \bGa - \la \tr ( \bGa {\bSi _{\rm{old}}}^{- 1} ) \} ] \non \\
&= 2 \int_{\Theta^2} d( \bde _{\rm{new}} , \bom _{\rm{new}} , \bde _{\rm{old}} , \bom _{\rm{old}} ) \ g( \bde _{\rm{new}} , \bom _{\rm{new}} ; \la ; \O ^{(p)} ) \ g( \bde _{\rm{old}} , \bom _{\rm{old}} ; \la ; \bGa ) \non \\
&\quad \times \mathbbm{1}[ \ \sqrt{\la } \{ \tr \bGa - \la \tr ( \bGa [ \{ \bOm ( \bom _{\rm{new}} ) \} ^{\top } \bDe ( \bde _{\rm{new}} ) \bOm ( \bom _{\rm{new}} )]^{- 1} ) \} \non \\
&\quad \ \ge \sqrt{\la } \{ \tr \bGa - \la \tr ( \bGa [ \{ \bOm ( \bom _{\rm{old}} ) \} ^{\top } \bDe ( \bde _{\rm{old}} ) \bOm ( \bom _{\rm{old}} )]^{- 1} ) \} \ ] \non \\
&\quad / \int_{\Theta^2} d( \bde _{\rm{new}} , \bom _{\rm{new}} , \bde _{\rm{old}} , \bom _{\rm{old}} ) \ g( \bde _{\rm{new}} , \bom _{\rm{new}} ; \la ; \O ^{(p)} ) \ g( \bde _{\rm{old}} , \bom _{\rm{old}} ; \la ; \bGa ) \text{,} \non 
\end{align}
where $\Theta = \mathbb{R} ^p \times (- \pi / 2, \pi / 2)^{p (p - 1) / 2}$. %
The above expression is simplified by using
\begin{align}
&\sqrt{\la } \{ \tr \bGa - \la \tr ( \bGa [ \{ \bOm ( \bom ) \} ^{\top } \bDe ( \bde ) \bOm ( \bom )]^{- 1} ) \} \non \\
&= \tr \Big[ \bGa \{ \bOm ( \bom ) \} ^{\top } \Big( \bdiag \Big( {\xi _1 \over 1 + \xi _1 / \sqrt{\la }}, \dots , {\xi _p \over 1 + \xi _p / \sqrt{\la }} \Big) \Big) \bOm ( \bom ) \Big] \text{.} \non 
\end{align}

Now, by using Lemma~\ref{lem:clt} that we will prove later, for each $i = 1, \dots , p$, we have
\begin{align}
\exp \Big[ - \la \Big\{ {\xi _i \over \sqrt{\la }} - \log \Big( 1 + {\xi _i \over \sqrt{\la }} \Big) \Big\} \Big] &\le \exp \Big( - {1 \over 2} {{\xi _i}^2 \over 1 + | \xi _i |} \Big) \non 
\end{align}
for all $\xi _i > - \sqrt{\la }$, and 
\begin{align}
\lim_{\la \to \infty } \exp \Big[ - \la \Big\{ {\xi _i \over \sqrt{\la }} - \log \Big( 1 + {\xi _i \over \sqrt{\la }} \Big) \Big\} \Big] = \exp \Big( - {1 \over 2} {\xi _i}^2 \Big) \non 
\end{align}
for all $\xi _i \in \mathbb{R}$. 
Therefore, 
\begin{align}
\lim_{\la \to \infty } g( \bxi , \bom ; \la ; \bGa ) &= g( \bxi , \bom ; \infty ) \non \\
&= J( \bom ) \Big\{ \prod_{1 \le i < j \le p} ( \xi _i - \xi _j ) \Big\} \Big\{ \prod_{i = 1}^{p} \exp \Big( - {1 \over 2} {\xi _i}^2 \Big) \Big\} \mathbbm{1}( \xi _1 > \dots > \xi _p ) \non 
\end{align}
at each $( \bxi , \bom ) \in \Th $. Hence, the limiting function, $g( \bxi , \bom ; \infty )$, is integrable and non-negative, and becomes a probability density after normalization. 
Similarly, we have
\begin{align}
&g( \bxi , \bom ; \la ; \bGa ) \le \Big\{ \prod_{1 \le i < j \le p} (| \xi _i | + | \xi _j |) \Big\} \prod_{i = 1}^{p} \exp \Big( - {1 \over 2} {{\xi _i}^2 \over 1 + | \xi _i |} \Big) \non 
\end{align}
for all $( \bxi , \bom ) \in \Theta$ for all $\la > 0$, the right hand side of which is integrable and independent of $\lambda$. 
Thus, it follows from the dominated convergence theorem that 
\begin{align}
&\lim_{\la \to \infty } {\rm{AAR}} \non \\
&= 2 \int_{\Theta^2} \Big( \mathbbm{1}( \tr [ \bGa \{ \bOm ( \bom _{\rm{new}} ) \} ^{\top } ( \bdiag \bxi _{\rm{new}} ) \bOm ( \bom _{\rm{new}} )] > \tr [ \bGa \{ \bOm ( \bom _{\rm{old}} ) \} ^{\top } ( \bdiag \bxi _{\rm{old}} ) \bOm ( \bom _{\rm{old}} )]) \non \\
&\qquad \times {g( \bxi _{\rm{new}} , \bom _{\rm{new}} ; \infty ) \over \int_{\Theta} g( \bxi , \bom ; \infty ) d( \bxi , \bom ) } {g( \bxi _{\rm{old}} , \bom _{\rm{old}} ; \infty ) \over \int_{\Theta} g( \bxi , \bom ; \infty ) d( \bxi , \bom )} \Big) d( \bxi _{\rm{new}} , \bom _{\rm{new}} , \bxi _{\rm{old}} , \bom _{\rm{old}} ) \text{.} \non 
\end{align}
Since the integrand above is symmetric as a function of $( \bxi _{\rm{new}} , \bom _{\rm{new}} )$ and $( \bxi _{\rm{old}} , \bom _{\rm{old}} )$, we conclude that 
\begin{equation*}
   \lim_{\la \to \infty } \mathbb{P}[ \tr ( \bGa {\bSi _{\rm{new}}}^{- 1} ) \le \tr ( \bGa {\bSi _{\rm{old}}}^{- 1} ) ] = \lim_{\la \to \infty } \mathbb{P}[ \tr ( \bGa {\bSi _{\rm{new}}}^{- 1} ) \ge \tr ( \bGa {\bSi _{\rm{old}}}^{- 1} ) ] = 1/2,
\end{equation*}%
and $\lim_{\la \to \infty } {\rm{AAR}} = \lim_{\la \to \infty } 2\mathbb{P}[ \tr ( \bGa {\bSi _{\rm{new}}}^{- 1} ) \le \tr ( \bGa {\bSi _{\rm{old}}}^{- 1} ) ] = 1$. This completes the proof. 
\end{proof}

\begin{lem}
\label{lem:clt} 
\hfill
\begin{itemize}
\item[{\rm{(i)}}]
For any $\la \ge 1$, we have that 
\begin{align}
&\la \Big\{ {\xi \over \sqrt{\la }} - \log \Big( 1 + {\xi \over \sqrt{\la }} \Big) \Big\} \ge \begin{cases} \displaystyle {1 \over 2} {{\xi }^2 \over 1 + | \xi |} \text{,} & \text{if $\xi > 0$} \text{,} \\ \displaystyle {1 \over 2} {\xi }^2 \text{,} & \text{if $\xi < 0$} \text{,} \non \end{cases} 
\end{align}
all $\xi > - \sqrt{\la }$. 
\item[{\rm{(ii)}}]
For all $\xi \in \mathbb{R}$, we have 
\begin{align}
\lim_{\la \to \infty } \la \Big\{ {\xi \over \sqrt{\la }} - \log \Big( 1 + {\xi \over \sqrt{\la }} \Big) \Big\} = {1 \over 2} \xi ^2 \text{.} \non 
\end{align}
\end{itemize}
\end{lem}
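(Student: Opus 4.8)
The plan is to substitute $u = \xi / \sqrt{\la}$ and reduce both parts to elementary statements about the single-variable function $\phi(u) = u - \log(1 + u)$ on $(-1, \infty)$, noting that $\la\{\xi/\sqrt{\la} - \log(1 + \xi/\sqrt{\la})\} = \la\,\phi(u)$. The facts I would use are $\phi(0) = 0$, $\phi'(u) = u/(1+u)$, and $\phi''(u) = 1/(1+u)^2 > 0$, so that $\phi$ is convex with unique minimum $0$ at the origin.

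For part (i), I would first establish two $\la$-free scalar inequalities by monotonicity from the base point $u = 0$. For $u > 0$, set $h(u) = \phi(u) - \tfrac{1}{2}\, u^2/(1+u)$; a direct computation gives $h'(u) = u^2/\{2(1+u)^2\} \ge 0$, so $h$ is nondecreasing and $h(0) = 0$ yields $\phi(u) \ge \tfrac{1}{2}\, u^2/(1+u)$. For $-1 < u < 0$, set $k(u) = \phi(u) - \tfrac{1}{2}\, u^2$; then $k'(u) = -u^2/(1+u) < 0$, so $k$ decreases on $(-1,0)$ and $k(0) = 0$ forces $\phi(u) \ge \tfrac{1}{2}\, u^2$ there. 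Multiplying by $\la$ and undoing the substitution, the case $\xi < 0$ is immediate, since $\la\,\phi(u) \ge \tfrac{\la}{2}\, u^2 = \tfrac{1}{2}\,\xi^2$. For $\xi > 0$, the first inequality gives $\la\,\phi(u) \ge \tfrac{1}{2}\,\xi^2/(1 + \xi/\sqrt{\la})$, and this is the only place the hypothesis $\la \ge 1$ is needed: since $\xi/\sqrt{\la} \le \xi = |\xi|$, the denominator satisfies $1 + \xi/\sqrt{\la} \le 1 + |\xi|$, giving the claimed bound $\tfrac{1}{2}\,\xi^2/(1+|\xi|)$.

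For part (ii), I would apply Taylor's theorem with Lagrange remainder to $\phi$ at $0$: using $\phi(0) = \phi'(0) = 0$ and $\phi''(u) = (1+u)^{-2}$, there is some $\theta = \theta(u) \in (0,1)$ with $\phi(u) = \tfrac{1}{2}\, u^2/(1 + \theta u)^2$. Substituting $u = \xi/\sqrt{\la}$ gives $\la\,\phi(\xi/\sqrt{\la}) = \tfrac{1}{2}\,\xi^2/(1 + \theta\,\xi/\sqrt{\la})^2$, and since $\theta\,\xi/\sqrt{\la} \to 0$ as $\la \to \infty$ for fixed $\xi$, the right-hand side tends to $\tfrac{1}{2}\,\xi^2$.

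The computations are entirely routine, so I do not anticipate a serious obstacle; the only point requiring care is the bookkeeping of the $\la$-dependence in part (i), where the factor $1/\sqrt{\la} \le 1$ is precisely what upgrades the natural bound with denominator $1 + \xi/\sqrt{\la}$ to the stated, $\la$-free bound with denominator $1 + |\xi|$, so that the resulting dominating function $\exp\{-\tfrac{1}{2}\,\xi^2/(1+|\xi|)\}$ does not depend on $\la$ and can be used in the dominated convergence argument of Proposition~\ref{prp:aar_la_1}.
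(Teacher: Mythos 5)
Your proof is correct, and it takes a genuinely different route from the paper's. The paper proves part (i) by expanding the logarithm as a power series --- writing $\log (1 + \th ) = \sum_{k = 1}^{\infty } \frac{1}{k} \bigl( \frac{\th }{1 + \th } \bigr) ^k$ for $\th > 0$ and $- \log (1 + \th ) = \sum_{k = 1}^{\infty } \frac{1}{k} | \th |^k$ for $\th < 0$ --- and then bounding the tails termwise to extract exactly the same two scalar inequalities you obtain; part (ii) in the paper is likewise handled by manipulating these series and invoking the dominated convergence theorem (applied to the series) to pass to the limit. You instead work with $\phi (u) = u - \log (1 + u)$ directly and get part (i) from monotonicity of $h(u) = \phi (u) - \tfrac{1}{2} u^2 / (1 + u)$ and $k(u) = \phi (u) - \tfrac{1}{2} u^2$ (your derivative computations $h'(u) = u^2 / \{ 2 (1 + u)^2 \} \ge 0$ and $k'(u) = - u^2 / (1 + u) \le 0$ check out), and part (ii) from the Lagrange form of the Taylor remainder, $\phi (u) = \tfrac{1}{2} u^2 / (1 + \theta u)^2$, which makes the limit immediate since $| \theta \xi / \sqrt{\la } | \le | \xi | / \sqrt{\la } \to 0$ regardless of how $\theta $ depends on $\la $. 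Your approach buys brevity and avoids both the termwise series bookkeeping and the extra dominated convergence step in part (ii); the paper's approach is more explicitly computational but needs that extra machinery. Notably, both arguments use the hypothesis $\la \ge 1$ at exactly the same point --- upgrading the denominator $1 + \xi / \sqrt{\la }$ to $1 + | \xi |$ in the case $\xi > 0$ --- and your closing remark correctly identifies why this $\la $-free form is what the dominated convergence argument in the enclosing proposition requires.
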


\begin{proof}
For part (i), let $\th = \xi / \sqrt{\la }$. 
Suppose first that $\xi > 0$. 
Then $\th > 0$ and 
\begin{align}
\log \Big( 1 + {\xi \over \sqrt{\la }} \Big) &= - \log \Big( 1 - {\th \over 1 + \th } \Big) = \sum_{k = 1}^{\infty } {1 \over k} \Big( {\th \over 1 + \th } \Big) ^k \text{.} \label{lcltp1} 
\end{align}
Therefore, 
\begin{align}
{\xi \over \sqrt{\la }} - \log \Big( 1 + {\xi \over \sqrt{\la }} \Big) &\ge \th - {\th \over 1 + \th } - \sum_{k = 2}^{\infty } {1 \over 2} \Big( {\th \over 1 + \th } \Big) ^k = {1 \over 2} {\th ^2 \over 1 + \th } \ge {1 \over 2} {\xi ^2 / \la \over 1 + \xi } \text{,} \non 
\end{align}
which implies that 
\begin{align}
\la \Big\{ {\xi \over \sqrt{\la }} - \log \Big( 1 + {\xi \over \sqrt{\la }} \Big) \Big\} \ge {1 \over 2} {{\xi }^2 \over 1 + | \xi |} \text{.} \non 
\end{align}
Next, suppose that $-\sqrt{\la}<\xi < 0$. 
Then $- 1 < \th < 0$ and 
\begin{align}
- \log \Big( 1 + {\xi \over \sqrt{\la }} \Big) &= - \log (1 - | \th |) = \sum_{k = 1}^{\infty } {1 \over k} | \th |^k \text{.} \label{lcltp2} 
\end{align}
Therefore, 
\begin{align}
{\xi \over \sqrt{\la }} - \log \Big( 1 + {\xi \over \sqrt{\la }} \Big) &= \sum _{k=2}^{\infty}\frac{1}{k}|\theta|^k \ge {1 \over 2} \th ^2 = {1 \over 2} {\xi ^2 \over \la} \text{.} \non 
\end{align}

For part (ii), suppose first that $\xi > 0$. 
Then, by (\ref{lcltp1}), 
\begin{align}
\la \Big\{ {\xi \over \sqrt{\la }} - \log \Big( 1 + {\xi \over \sqrt{\la }} \Big) \Big\} &= \la \Big\{ {( \xi / \sqrt{\la } )^2 \over 1 + \xi / \sqrt{\la }} - {1 \over 2} {( \xi / \sqrt{\la } )^2 \over (1 + \xi / \sqrt{\la } )^2} - \sum_{k = 3}^{\infty } {1 \over k} \Big( {\xi / \sqrt{\la } \over 1 + \xi / \sqrt{\la }} \Big) ^k \Big\} \non \\
&= {\xi ^2 \over 1 + \xi / \sqrt{\la }} - {1 \over 2} {\xi ^2 \over (1 + \xi / \sqrt{\la } )^2} - \xi ^2 \sum_{k = 1}^{\infty } {1 \over k + 2} {1 \over (1 + \xi / \sqrt{\la } )^2} \Big( {\xi / \sqrt{\la } \over 1 + \xi / \sqrt{\la }} \Big) ^k \non 
\end{align}
for all $\la > 0$. 
Since 
\begin{align}
{1 \over k + 2} {1 \over (1 + \xi / \sqrt{\la } )^2} \Big( {\xi / \sqrt{\la } \over 1 + \xi / \sqrt{\la }} \Big) ^k &\le \Big( {\xi \over 1 + \xi } \Big) ^k \non 
\end{align}
for all $\la > 1$ for all $k \ge 1$, it follows from the dominated convergence theorem that 
\begin{align}
\lim_{\la \to \infty } \la \Big\{ {\xi \over \sqrt{\la }} - \log \Big( 1 + {\xi \over \sqrt{\la }} \Big) \Big\} &= {1 \over 2} \xi ^2 \text{.} \non 
\end{align}
Next, suppose that $\xi < 0$. 
Then, by (\ref{lcltp2}), 
\begin{align}
\la \Big\{ {\xi \over \sqrt{\la }} - \log \Big( 1 + {\xi \over \sqrt{\la }} \Big) \Big\} &= \la \Big\{ {1 \over 2} {\xi ^2 \over \la } + \sum_{k = 3}^{\infty } {1 \over k} \Big( {| \xi | \over \sqrt{\la }} \Big) ^k \Big\} \non 
\end{align}
for all $\la > \xi ^2$. 
By the dominated convergence theorem, the right-hand side of the above equality converges to $\xi ^2 / 2$ as $\la \to \infty $. 
\end{proof}

\subsection{Example~2: Large $\bPsi$}

\begin{prp}
\label{prp:aar_psi} 
Suppose that $\la \ge - 1$, $\bPsi / 2 = \bdiag ( \psi , 1, \dots , 1) > \O ^{(p)}$, and $\bGa / 2 = \I ^{(p)}$. 
Then, as $\psi \to \infty $, the average acceptance rate converges to $0$. 
\end{prp}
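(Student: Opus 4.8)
The plan is to mirror the proof of Proposition~\ref{prp:aar_la_0}, reducing the statement to a moment bound on $\bSi _{\rm{old}}$ and then controlling that moment by a scaling argument. Since $\bGa / 2 = \I ^{(p)}$, we have $\tr ( \bGa \bSi ^{- 1} ) = 2 \tr ( \bSi ^{- 1} )$, so ${\rm{AAR}} = 2 \mathbb{P}[ \tr ( \bSi _{\rm{new}}^{- 1} ) \le \tr ( \bSi _{\rm{old}}^{- 1} ) ]$. As in Proposition~\ref{prp:aar_la_0}, $( \bSi _{\rm{new}}^{- 1} )_{1, 1} \le \tr ( \bSi _{\rm{new}}^{- 1} )$ and $( \bSi _{\rm{new}}^{- 1} )_{1, 1} \sim {\rm{IG}} ( \la + 1, \psi )$; from the Wishart Schur-complement representation $1 / ( \bSi _{\rm{new}}^{- 1} )_{1, 1} \sim \chi _{2 \la + 2}^{2} / ( 2 \psi )$, so $\mathbb{E}[ 1 / ( \bSi _{\rm{new}}^{- 1} )_{1, 1} ] = ( \la + 1 ) / \psi$. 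Conditioning on $\bSi _{\rm{old}}$ and applying Markov's inequality to $1 / ( \bSi _{\rm{new}}^{- 1} )_{1, 1}$ then gives ${\rm{AAR}} \le 2 ( \la + 1 ) \mathbb{E}[ \tr ( \bSi _{\rm{old}}^{- 1} ) ] / \psi$. Hence it suffices to show that $\mathbb{E}[ \tr ( \bSi _{\rm{old}}^{- 1} ) ] = o( \psi )$ as $\psi \to \infty$.

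To isolate the growth in $\psi$, I would rescale with the symmetric matrix $\C = \bdiag ( \psi ^{1 / 4} , 1, \dots , 1)$. By the rescaling property of the MGIG distribution in Section~2.1, $\bSi ^{*} = \C \bSi _{\rm{old}} \C \sim {\rm{MGIG}}_p ( \la , \M , \M )$ with $\M = \bdiag ( 2 \sqrt{\psi } , 2, \dots , 2)$, since both $\C ^{- 1} \bPsi \C ^{- 1}$ and $\C \bGa \C$ equal $\M$. Because $\tr ( \bSi _{\rm{old}}^{- 1} ) = \tr ( \C ^{2} ( \bSi ^{*} )^{- 1} ) = \sqrt{\psi } \, (( \bSi ^{*} )^{- 1} )_{1, 1} + \sum_{i = 2}^{p} (( \bSi ^{*} )^{- 1} )_{i, i}$, it is enough to prove that $\mathbb{E}[ (( \bSi ^{*} )^{- 1} )_{i, i} ]$ stays bounded as $\psi \to \infty$ for every $i$; this yields $\mathbb{E}[ \tr ( \bSi _{\rm{old}}^{- 1} ) ] = O( \sqrt{\psi } ) = o( \psi )$ and finishes the proof.

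For the uniform bound, set $\U = ( \bSi ^{*} )^{- 1} \sim {\rm{MGIG}}_p ( - \la - p - 1, \M , \M )$ and write $\M = 2 \I ^{(p)} + 2 ( \sqrt{\psi } - 1 ) \e _1 \e _1^{\top }$. The density of $\U$ is then proportional to $| \U |^{- \la - p - 1} \exp [ - \tr ( \U ) - \tr ( \U ^{- 1} ) - ( \sqrt{\psi } - 1 ) \{ U_{1, 1} + ( \U ^{- 1} )_{1, 1} \} ]$, and since $U_{1, 1} + ( \U ^{- 1} )_{1, 1} \ge 2$ the factor $\exp \{ - 2 ( \sqrt{\psi } - 1 ) \}$ cancels between numerator and denominator of $\mathbb{E}[ U_{i, i} ]$. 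Writing $\phi ( \U ) = U_{1, 1} + ( \U ^{- 1} )_{1, 1} - 2 \ge 0$, this exhibits $\mathbb{E}[ U_{i, i} ]$ as the expectation of $U_{i, i}$ under the $\psi$-independent base measure $| \U |^{- \la - p - 1} \exp \{ - \tr ( \U ) - \tr ( \U ^{- 1} ) \} \, d\U$ tilted by $\exp \{ - ( \sqrt{\psi } - 1 ) \phi ( \U ) \}$. As $\psi \to \infty$ this tilt concentrates on the zero set $\{ \phi = 0 \} = \{ U_{1, 1} = 1, \ U_{1, j} = 0 \ (j \ge 2) \}$, on which $\U$ is block-diagonal with a fixed lower block; a Laplace expansion in the fast directions $( U_{1, 1} , U_{1, 2:p} )$ shows that $\mathbb{E}[ U_{i, i} ]$ converges to a finite limit ($1$ for $i = 1$ and a diagonal moment of a ${\rm{MGIG}}_{p - 1} ( \cdot , 2 \I ^{(p - 1)} , 2 \I ^{(p - 1)} )$ law for $i \ge 2$), hence is bounded in $\psi$.

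The main obstacle is exactly this last uniform-boundedness step, because the normalizing constants of the MGIG laws involve the intractable matrix Bessel function. The point of the rescaling and the factorization of $\M$ is to remove this difficulty: the divergent factor $\exp \{ - 2 ( \sqrt{\psi } - 1 ) \}$ cancels in the ratio defining $\mathbb{E}[ U_{i, i} ]$, so no Bessel-function asymptotics are needed and the problem becomes a standard Laplace estimate for a fixed measure tilted by the nonnegative $\phi$. The care required there is in the change of variables separating the quadratically nondegenerate directions $( U_{1, 1} , U_{1, 2:p} )$ from the lower block $\U _{2:p, 2:p}$, together with an integrable domination (as in Proposition~\ref{prp:aar_la_0}) that upgrades the pointwise convergence of the tilted integrands to convergence, and hence boundedness, of the moments.
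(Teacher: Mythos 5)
Your proposal is correct and its route is genuinely different from the paper's. The paper never passes to moments: it rescales only $\bSi_{\rm new}$ (so that its law is free of $\psi$, making $\tr(\bGa\bSi_{\rm new}^{-1})$ grow like $\psi$ for a fixed rescaled matrix), writes $\bSi_{\rm old}=\B\A\B^{\top}$ in Cholesky coordinates, and constructs an explicit change of variables $a_1\mapsto\th$ under which $\tr(\bSi_{\rm old}^{-1})\sim\sqrt{\psi}$ pointwise while the transformed density is dominated by an integrable function free of $\psi$; dominated convergence applied to the indicator then gives $\mathrm{AAR}\to0$. You instead transplant the Markov-inequality reduction from Proposition~\ref{prp:aar_la_0} to get $\mathrm{AAR}\le 2(\la+1)\mathbb{E}[\tr(\bSi_{\rm old}^{-1})]/\psi$, and control the moment via the symmetric rescaling $\bSi^{*}\sim{\rm MGIG}_p(\la,\M,\M)$ with $\M=\bdiag(2\sqrt{\psi},2,\dots,2)$, followed by an exponential-tilting/Laplace argument; the cancellation of the intractable normalizing constant in the ratio defining $\mathbb{E}[U_{i,i}]$ plays exactly the role that the cancellation of Bessel functions plays in the paper, and your limiting ${\rm MGIG}_{p-1}(\cdot,2\I,2\I)$ law is the same object that appears (as ${\rm MGIG}_{p-1}(\la+1/2,2\I^{(p-1)},2\I^{(p-1)})$) in the paper's integrability check, so both arguments localize onto the same structure. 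Your route buys an explicit rate, $\mathrm{AAR}=O(\psi^{-1/2})$, and a cleaner probabilistic reduction; its cost is that the uniform boundedness of $\mathbb{E}[U_{i,i}]$ — the technical heart, comparable in effort to the paper's change of variables — is left as a sketch. That step does go through: in Schur-complement coordinates $u=U_{1,1}-\v^{\top}\W^{-1}\v$, $\v=(\U)_{2:p,1}$, $\W=(\U)_{2:p,2:p}$, one has the exact identity $U_{1,1}+(\U^{-1})_{1,1}-2=(u-1)^2/u+\v^{\top}\W^{-1}\v$, so with $t=\sqrt{\psi}-1$ the tilt factorizes into a one-dimensional Laplace integral in $u$ of order $t^{-1/2}$ and an exact Gaussian integral in $\v$ of order $t^{-(p-1)/2}|\W|^{1/2}$, the factor $|\W|^{1/2}$ being absorbed by $e^{-\tr\W}$ in the base measure; matching this $O(t^{-p/2})$ upper bound on the numerator against the elementary lower bound $\gtrsim t^{-p/2}$ on the denominator (restrict to a compact neighborhood of the zero set) yields the required bound, so your plan can be completed along the lines you indicate. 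One last shared caveat: like the paper's own argument, your proof needs $\la>-1$ (the Gamma shape $\la+1$ must be positive, and the Wishart proposal must be proper), so the boundary case $\la=-1$ in the statement should be excluded or handled separately.
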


\begin{proof}
First, we have 
\begin{align}
{\rm{AAR}} ( \la , \bPsi , \bGa ) &= 2 \mathbb{E}[ \ \mathbbm{1}[ \tr ( {\bSi _{\rm{new}}}^{- 1} ) \le \tr ( {\bSi _{\rm{old}}}^{- 1} ) ] \ ] \non \\
&= 2 \mathbb{E}[ \ \mathbbm{1}[ \tr ( \{ \bdiag ( \psi , 1, \dots , 1) \} {\bSit _{\rm{new}}}^{- 1} ) \le \tr ( {\bSi _{\rm{old}}}^{- 1} ) ] \ ] \text{,} \non 
\end{align}
where 
\begin{align}
&\bSit _{\rm{new}} = ( \bPsi / 2)^{1 / 2} \bSi ( \bPsi / 2)^{1 / 2} \text{.} \non 
\end{align}
The density of $\bSit _{\rm{new}}$ is proportional to $| \bSit _{\rm{new}} |^{\la } \etr ( - \bSit _{\rm{new}} )$, which is independent of $\psi $. 

Next, we consider the change-of-variables for $\bSi _{\rm{old}}$ as follows. Let $\a = ( a_i )_{i = 1}^{p} \in (0, \infty )^p$ and $\b = (( b_{i, j} )_{j = 1}^{i - 1} )_{i = 2}^{p} \in \mathbb{R} ^{p (p - 1) / 2}$ be such that $\bSi _{\rm{old}} = \B \A \B ^{\top }$ for $\A = \bdiag ( a_1 , \dots , a_p )$ and  %
\begin{align}
\B = \begin{pmatrix} \begin{pmatrix} 1 \\ \b _1 \end{pmatrix} & \begin{pmatrix} \e _{2}^{(2)} \\ \b _2 \end{pmatrix} & \cdots & \begin{pmatrix} \e _{p - 1}^{(p - 1)} \\ \b _{p - 1} \end{pmatrix} & \e _{p}^{(p)} \end{pmatrix} = \begin{pmatrix} 1 & 0 & \cdots & 0 & 0 \\ b_{2, 1} & 1 & \cdots & 0 & 0 \\ \vdots & \vdots & \ddots & \vdots & \vdots \\ b_{p - 1, 1} & b_{p - 1, 2} & \cdots & 1 & 0 \\ b_{p, 1} & b_{p, 2} & \cdots & b_{p, p - 1} & 1 \end{pmatrix} \text{.} \non 
\end{align}
Note that the trace in the MGIG density is written as 
\begin{align}
\tr ( \A \B ^{\top} \bPsi \B ) &= \sum_{i = 1}^{p} a_i \{ ( \e _{i}^{(p)} )^{\top } \B ^{\top} \bPsi \B \e _{i}^{(p)} \} \non \\
&= \sum_{i = 1}^{p} a_i (( \bm{0} ^{(i - 1)} )^{\top } , 1, {\b _i}^{\top } ) \begin{pmatrix} \psi & 0 & \cdots & 0 \\ 0 & 1 & \cdots & 0 \\ \vdots & \vdots & \ddots & \vdots \\ 0 & 0 & \cdots & 1 \end{pmatrix} \begin{pmatrix} \bm{0} ^{(i - 1)} \\ 1 \\ \b _i \end{pmatrix} \non \\
&= a_1 ( \psi + \| \b _1 \| ^2 ) + \sum_{i = 2}^{p} a_i (1 + \| \b _i \| ^2 ) \text{.} \non 
\end{align}
Note also that we have $( \B ^{- 1} ( \B ^{- 1} )^{\top } )_{1, 1} = 1$. Then, the density of $(\a,\b)$ is written as 
\begin{align}
&p( \a , \b ; \psi , \la ) \non \\
&\propto {a_1}^{\la + p - 1} \exp \{ - a_1 ( \psi + \| \b _1 \| ^2 ) \} \exp (- 1 / a_1 ) \non \\
&\quad \times \Big( \prod_{i = 2}^{p} {a_i}^{\la + p - i} \Big) \exp \Big\{ - \sum_{i = 2}^{p} a_i (1 + \| \b _i \| ^2 ) \Big\} \etr \Big\{ - \bdiag \Big( {1 \over a_2} , \dots , {1 \over a_p} \Big) ( \B ^{- 1} ( \B ^{- 1} )^{\top } )_{2:p, 2:p} \Big\} \text{.} \non 
\end{align}
In the expression above, the density kernel depends on $\psi$ %
via $a_1 ( \psi + \| \b _1 \| ^2 )$. 
We transform $(a_1,\a_{2:p},\b)$ to $(\theta,\a_{2:p},\b)$ by: $a_1 = \sqrt{1 / ( \psi + \| \b _1 \| ^2 )} \al $, $\alt = \sqrt{\al }$, $\xi = \alt - 1 / \alt $, and $\xi = \th / ( \psi + \| \b _1 \| ^2 )^{1 / 4}$. That is, $a_1$ is written as 
\begin{align}
a_1 = {1 \over \sqrt{\psi + \| \b _1 \| ^2}} \Big[ {\th / ( \psi + \| \b _1 \| ^2 )^{1 / 4} + \sqrt{\{ \th / ( \psi + \| \b _1 \| ^2 )^{1 / 4} \} ^2 + 4} \over 2} \Big] ^2 \text{.} \non 
\end{align}
Using this expression, we can rewrite $\tr ( {\bSi _{\rm{old}}}^{- 1} )$ as 
\begin{align}
\tr ( {\bSi _{\rm{old}}}^{- 1} ) &= 2 \sqrt{\psi + \| \b _1 \| ^2} / \Big\{ {\th ^2 \over \sqrt{\psi + \| \b _1 \| ^2}} + 2 + {\th \over ( \psi + \| \b _1 \| ^2 )^{1 / 4}} \sqrt{{\th ^2 \over \sqrt{\psi + \| \b _1 \| ^2}} + 4} \Big\} \non \\
&\quad + \sum_{i = 2}^{p} {1 \over a_i} ( \B ^{- 1} ( \B ^{- 1} )^{\top } )_{i, i} \non 
\end{align}
and therefore $\tr ( {\bSi _{\rm{old}}}^{- 1} ) \sim \sqrt{\psi }$ as $\psi \to \infty $. This shows that, for any value of $\tilde{\bSi}_{\rm{new}}$ and $(\theta ,\a_{2:p},\b)$, the indicator function, $\mathbbm{1}[ \tr \{ ( \bdiag ( \psi , 1, \dots , 1) ) {\bSit _{\rm{new}}}^{- 1} \} \le \tr ( {\bSi _{\rm{old}}}^{- 1} ) ]$, converges to zero as $\psi \to \infty$. Below, we show that the density of $(\theta ,\a_{2:p},\b)$ is bounded by an integrable, non-negative function that is independent of $\psi$. Since the density of $\tilde{\bSi}_{\rm{new}}$ does not involve $\psi$, by the dominated convergence theorem, we conclude that the AAR converges to zero.

To study the density of $(\theta ,a_{2:p},\b)$, we define and evaluate its density kernel $g$ as follows: 
\begin{align}
&p( \th , \a _{2:p} , \b ; \psi ; \la ) \propto g( \th , \a _{2:p} , \b ; \psi ; \la ) \non \\
&= {1 \over (1 + \| \b _1 \| ^2 / \psi )^{( \la + p) / 2 + 1 / 4}} \Big\{ 2 {\th \over ( \psi + \| \b _1 \| ^2 )^{1 / 4}} + \sqrt{4 + {\th ^2 \over ( \psi + \| \b _1 \| ^2 )^{1 / 2}}} + {\th ^2 / \sqrt{\psi + \| \b _1 \| } \over \sqrt{4 + \th ^2 / \sqrt{\psi + \| \b _1 \| ^2}}} \Big\} \non \\
&\quad \times \Big\{ {\th ^2 \over \sqrt{\psi + \| \b _1 \| ^2}} + 2 + {\th \over ( \psi + \| \b _1 \| ^2 )^{1 / 4}} \sqrt{4 + {\th ^2 \over \sqrt{\psi + \| \b _1 \| ^2}}} \Big\} ^{\la + p - 1} \exp \Big( - \th ^2 - 2 {\| \b _1 \| ^2 \over \sqrt{\psi + \| \b _1 \| ^2} + \sqrt{\psi }} \Big) \non \\
&\quad \times \Big( \prod_{i = 2}^{p} {a_i}^{\la + p - i} \Big) \exp \Big\{ - \sum_{i = 2}^{p} a_i (1 + \| \b _i \| ^2 ) \Big\} \etr \Big\{ - \bdiag \Big( {1 \over a_2}, \dots , {1 \over a_p} \Big) ( \B ^{- 1} ( \B ^{- 1} )^{\top } )_{2:p, 2:p} \Big\} \non \\
&\le (2 | \th | + \sqrt{4 + \th ^2 } + \th ^2 / 2) ( \th ^2 + 2 + | \th | \sqrt{4 + \th ^2 } )^{\la + p - 1} \exp (- \th ^2 ) \non \\
&\quad \times \Big( \prod_{i = 2}^{p} {a_i}^{\la + p - i} \Big) \exp \Big\{ - \sum_{i = 2}^{p} a_i (1 + \| \b _i \| ^2 ) \Big\} \etr \Big\{ - \bdiag \Big( {1 \over a_2}, \dots , {1 \over a_p} \Big) ( \B ^{- 1} ( \B ^{- 1} )^{\top } )_{2:p, 2:p} \Big\} \text{,} \non 
\end{align}
where the inequality holds if $\psi \ge 1$. 
The upper bound of $g$ obtained here is clearly independent of $\psi $ and integrable since%
\begin{align}
&\int_{(0, \infty )^{p - 1} \times \mathbb{R} ^{p (p - 1) / 2}} \Big[ \Big( \prod_{i = 2}^{p} {a_i}^{\la + p - i} \Big) \exp \Big\{ - \sum_{i = 2}^{p} a_i (1 + \| \b _i \| ^2 ) \Big\} \non \\
&\quad \times \etr \Big\{ - \Big( \bdiag \Big( {1 \over a_2}, \dots , {1 \over a_p} \Big) \Big) ( \B ^{- 1} ( \B ^{- 1} )^{\top } )_{2:p, 2:p} \Big\} \Big] d( \a _{2:p} , \b ) \non \\
&= \int_{(0, \infty )^{p - 1} \times \mathbb{R} ^{p (p - 1) / 2}} \Big( \Big( \prod_{i = 2}^{p} {a_i}^{\la + p - i} \Big) \exp \Big\{ - \sum_{i = 2}^{p} a_i (1 + \| \b _i \| ^2 ) \Big\} \non \\
&\quad \times \exp \Big\{ - (- \b _1 )^{\top } {\Bbt _2}^{\top } \dotsm {\Bbt _p}^{\top } \Big( \bdiag \Big( {1 \over a_2}, \dots , {1 \over a_p} \Big) \Big) \Bbt _p \dotsm \Bbt _2 (- \b _1 ) \Big\} \non \\
&\quad \times \exp \Big[ - \tr \Big\{ {\Bbt _2}^{\top } \dotsm {\Bbt _p}^{\top } \Big( \bdiag \Big( {1 \over a_2}, \dots , {1 \over a_p} \Big) \Big) \Bbt _p \dotsm \Bbt _2 \Big\} \Big] \Big) d( \a _{2:p} , \b ) \non \\
&\propto \int_{(0, \infty )^{p - 1} \times \mathbb{R} ^{(p - 1) (p - 2) / 2}} \Big[ \Big( \prod_{i = 2}^{p} {a_i}^{\la + p - i + 1 / 2} \Big) \exp \Big\{ - \sum_{i = 2}^{p} a_i (1 + \| \b _i \| ^2 ) \Big\} \non \\
&\quad \times \etr \Big\{ - {\Bbt _2}^{\top } \dotsm {\Bbt _p}^{\top } \Big( \bdiag \Big( {1 \over a_2}, \dots , {1 \over a_p} \Big) \Big) \Bbt _p \dotsm \Bbt _2 \Big\} \Big] d( \a _{2:p} , \b _{- 1} ) \non \\
&\propto \int_{(0, \infty )^{p - 1} \times \mathbb{R} ^{(p - 1) (p - 2) / 2}} {\rm{MGIG}}_{p - 1} ( \tilde{\a } , \tilde{\b } | \la + 1 / 2, 2 \I ^{(p - 1)} , 2 \I ^{(p - 1)} ) d( \a _{2:p} , \b _{- 1} ) < \infty \text{,} \non 
\end{align}
where $\b _{- 1} = \b \setminus \b _1$ and 
\begin{align}
\Bbt _i = \begin{pmatrix} \e _{1}^{(p)} & \cdots & \e _{i - 1}^{(p)} & \begin{pmatrix} \e _{i}^{(i)} \\ - \b _i \end{pmatrix} & \e _{i + 1}^{(p)} & \cdots & \e _{p}^{(p)} \end{pmatrix} _{2:p, 2:p} \text{,} \quad i = 2, \dots , p \text{.} \non 
\end{align}
Also, the limit of the density kernel is 
\begin{align*}
    &\lim _{\psi \to \infty}  g( \th , \a _{2:p} , \b ; \psi ; \la ) = 2^{\la + p} e^{-\theta^2} \\
    &\times \Big( \prod_{i = 2}^{p} {a_i}^{\la + p - i} \Big) \exp \Big\{ - \sum_{i = 2}^{p} a_i (1 + \| \b _i \| ^2 ) \Big\} \etr \Big\{ - \bdiag \Big( {1 \over a_2}, \dots , {1 \over a_p} \Big) ( \B ^{- 1} ( \B ^{- 1} )^{\top } )_{2:p, 2:p} \Big\} ,
\end{align*}
which is also integrable. Hence, the normalizing constant of $g$ is shown to converge to some finite, non-zero value as $\psi \to \infty$. This shows that the original density, $p( \th , \a _{2:p} , \b ; \psi ; \la )$, is bounded by an integrable function that is independent of $\psi$. This completes the proof. 

\end{proof}

\section{Acceralation of the Gibbs sampler by parallellization}
For $i = 1, \dots , p$, let $\P ^{(i)} = ( \e _{i}^{(i)} , \dots , \e _{1}^{(i)} )$ and note that $\P ^{(i)} = ( \P ^{(i)} )^{\top } = ( \P ^{(i)} )^{- 1}$. 
Let $\bPsit ^{1 / 2}$ be the lower triangular matrix with positive diagonal elements satisfying $\bPsit ^{1 / 2} ( \bPsit ^{1 / 2} )^{\top } = \P ^{(p)} \bPsi \P ^{(p)}$ and write $\bPsit ^{- 1 / 2} = ( \bPsit ^{1 / 2} )^{- 1}$. 

\begin{lem}
\label{lem:parallelization} 
Let $\U _i$ and $\bLa _i$ be orthogonal and diagonal matrices such that $\U _i \bLa _i {\U _i}^{\top } = ( \S )_{1:(p - i), 1:(p - i)}$ for $i = 1, \dots , p - 1$, where $\S = \bPsit ^{- 1 / 2} \P ^{(p)} ( \B ^{- 1} )^{\top } \A ^{- 1} \B ^{- 1} \P ^{(p)} ( \bPsit ^{- 1 / 2} )^{\top }$. 
Then, for all $i = 1, \dots , p - 1$ and all $\al , \mu > 0$, we have 
\begin{align}
&\al ( \bPsi )_{(i + 1):p, (i + 1):p} + \mu (( \B ^{- 1} )^{\top } \A ^{- 1} \B ^{- 1} )_{(i + 1):p, (i + 1):p} \non \\
&= \P ^{(p - i)} ( \bPsit ^{1 / 2} )_{1:(p - i), 1:(p - i)} \U _i ( \al \I ^{(p - i)} + \mu \bLa _i ) {\U _i}^{\top } \{ ( \bPsit ^{1 / 2} )_{1:(p - i), 1:(p - i)} \} ^{\top } \P ^{(p - i)} \text{.} \non 
\end{align}
\end{lem}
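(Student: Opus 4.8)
The plan is to reduce every bottom-right submatrix to a top-left submatrix via the reversal permutation $\P^{(p)}$, and then exploit the lower-triangular structure of the Cholesky factor $\bPsit^{1/2}$ so that all the $\mu$- and $\al$-dependence collapses onto the common eigenbasis. First I would record the elementary entrywise identity that, for any $p\times p$ matrix $\C$, the bottom-right block is the reversed top-left block of the reversed matrix:
\begin{align}
( \C )_{(i + 1):p, (i + 1):p} = \P ^{(p - i)} ( \P ^{(p)} \C \P ^{(p)} )_{1:(p - i), 1:(p - i)} \P ^{(p - i)} \text{.} \non
\end{align}
This is checked by tracking entries: the $(m,n)$ entry of the right-hand side equals $c_{i+m,\,i+n}$, which matches the left. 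Applying this identity separately to $\bPsi$ and to $\Q = ( \B^{-1})^{\top } \A^{-1} \B^{-1}$ reduces the claim to an identity about the top-left $(p-i)\times(p-i)$ blocks of $\P^{(p)} \bPsi \P^{(p)}$ and $\P^{(p)} \Q \P^{(p)}$, with the outer $\P^{(p-i)}$ factors matching those in the stated conclusion.

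Next I would invoke the key structural fact that congruence by a lower-triangular matrix commutes with taking top-left blocks: for lower-triangular $\T$ one has $( \T \C \T^{\top } )_{1:k, 1:k} = (\T)_{1:k,1:k}\,(\C)_{1:k,1:k}\,\{ (\T)_{1:k,1:k} \}^{\top }$. This follows by partitioning $\T$ into $2\times 2$ blocks whose upper-right corner vanishes and multiplying out; the top-left block of the product then depends only on the top-left block of $\T$. Applying it with $\T = \bPsit^{1/2}$ and $\C = \I^{(p)}$, together with $\bPsit^{1/2}(\bPsit^{1/2})^{\top } = \P^{(p)}\bPsi\P^{(p)}$, gives $( \P^{(p)} \bPsi \P^{(p)} )_{1:(p-i),1:(p-i)} = (\bPsit^{1/2})_{1:(p-i),1:(p-i)}\{ (\bPsit^{1/2})_{1:(p-i),1:(p-i)} \}^{\top }$. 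Applying it with $\C = \S$, after rewriting $\P^{(p)} \Q \P^{(p)} = \bPsit^{1/2}\,\S\,(\bPsit^{1/2})^{\top }$ from the definition of $\S$, gives $( \P^{(p)} \Q \P^{(p)} )_{1:(p-i),1:(p-i)} = (\bPsit^{1/2})_{1:(p-i),1:(p-i)}\,(\S)_{1:(p-i),1:(p-i)}\,\{ (\bPsit^{1/2})_{1:(p-i),1:(p-i)} \}^{\top }$.

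Finally I would substitute the eigendecomposition $(\S)_{1:(p-i),1:(p-i)} = \U_i \bLa_i {\U_i}^{\top }$ into the $\mu$-term and use $\U_i {\U_i}^{\top } = \I^{(p-i)}$ to rewrite the $\al$-term with the same outer factors $(\bPsit^{1/2})_{1:(p-i),1:(p-i)}\U_i$ and its transpose. Both terms then share these outer factors, so $\al \I^{(p-i)} + \mu \bLa_i$ factors out in the middle; reinstating the outer $\P^{(p-i)}$ permutations produces exactly the claimed expression. I expect the main obstacle to be the block-triangular commutation step: one must argue cleanly that the top-left block of $\T \C \T^{\top }$ sees only the top-left blocks of $\T$ and $\C$, which rests entirely on the vanishing upper-right block of the lower-triangular $\T$. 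The entrywise permutation identity and the concluding factorization are otherwise routine bookkeeping.
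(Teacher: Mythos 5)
Your proof is correct and follows essentially the same route as the paper's: the reversal-permutation identity to turn bottom-right blocks into top-left blocks, the fact that congruence by the lower-triangular factor $\bPsit^{1/2}$ commutes with taking top-left blocks, and the final substitution of the eigendecomposition of $(\S)_{1:(p-i),1:(p-i)}$ with $\U_i\U_i^{\top}=\I^{(p-i)}$ absorbing the $\al$ term. The only difference is presentational: you treat the $\al$- and $\mu$-terms separately and state the lower-triangular block-commutation fact as an explicit lemma, whereas the paper applies the same facts tacitly to the combined matrix $\al\bPsi + \mu(\B^{-1})^{\top}\A^{-1}\B^{-1}$ in a single chain of equalities.
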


\begin{proof}
We have 
\begin{align}
&\al ( \bPsi )_{(i + 1):p, (i + 1):p} + \mu (( \B ^{- 1} )^{\top } \A ^{- 1} \B ^{- 1} )_{(i + 1):p, (i + 1):p} \non \\
&= \P ^{(p - i)} ( \P ^{(p)} \{ \al \bPsi + \mu ( \B ^{- 1} )^{\top } \A ^{- 1} \B ^{- 1} \} \P ^{(p)} )_{1:(p - i), 1:(p - i)} \P ^{(p - i)} \non \\
&= \P ^{(p - i)} ( \bPsit ^{1 / 2} ( \al \I ^{(p)} + \mu \S ) ( \bPsit ^{1 / 2} )^{\top } )_{1:(p - i), 1:(p - i)} \P ^{(p - i)} \non \\
&= \P ^{(p - i)} ( \bPsit ^{1 / 2} )_{1:(p - i), 1:(p - i)} ( \al \I ^{(p - i)} + \mu \U _i \bLa _i {\U _i}^{\top } ) (( \bPsit ^{1 / 2} )^{\top } )_{1:(p - i), 1:(p - i)} \P ^{(p - i)} \non \\
&= \P ^{(p - i)} ( \bPsit ^{1 / 2} )_{1:(p - i), 1:(p - i)} \U _i ( \al \I ^{(p - i)} + \mu \bLa _i ) {\U _i}^{\top } \{ ( \bPsit ^{1 / 2} )_{1:(p - i), 1:(p - i)} \} ^{\top } \P ^{(p - i)} \text{.} \non 
\end{align}
\end{proof}

It follows from Lemma \ref{lem:parallelization} that we can easily update $\b _i$, $i = 1, \dots, p - 1$, after first decomposing $( \bPsit ^{- 1 / 2} \P ^{(p)} ( \B ^{- 1} )^{\top } \A ^{- 1} \B ^{- 1} \P ^{(p)} ( \bPsit ^{- 1 / 2} )^{\top } )_{1:i, 1:i}$ for each $i = 1, \dots , p$, for which we could use parallelization. 
Note that the approach here is to compute eigenpairs instead of inverses.

\section{Additional results on the simulation study in Section~4.1}

In Section~4.1, we studied the computational efficiencies of the Gibbs sampler and MH methods in the numerical experiment when the order parameter of the MGIG distribution is set to $\la = 2$. We changed this value to $\la = 10$ and conducted the same experiment. The ESSs and ESSs per second in this experiment are summarized in Figure~S1. The performance of the MH methods improve, which is consistent with the results reported in the literature. We would like to emphasize that the success of the MH methods for large $\la$ is not guaranteed in more complex statistical models, as evidenced in our example of the partial Gaussian graphical models in Section~4.2.

\begin{figure}[!htb]
\centering
\includegraphics[width = \linewidth]{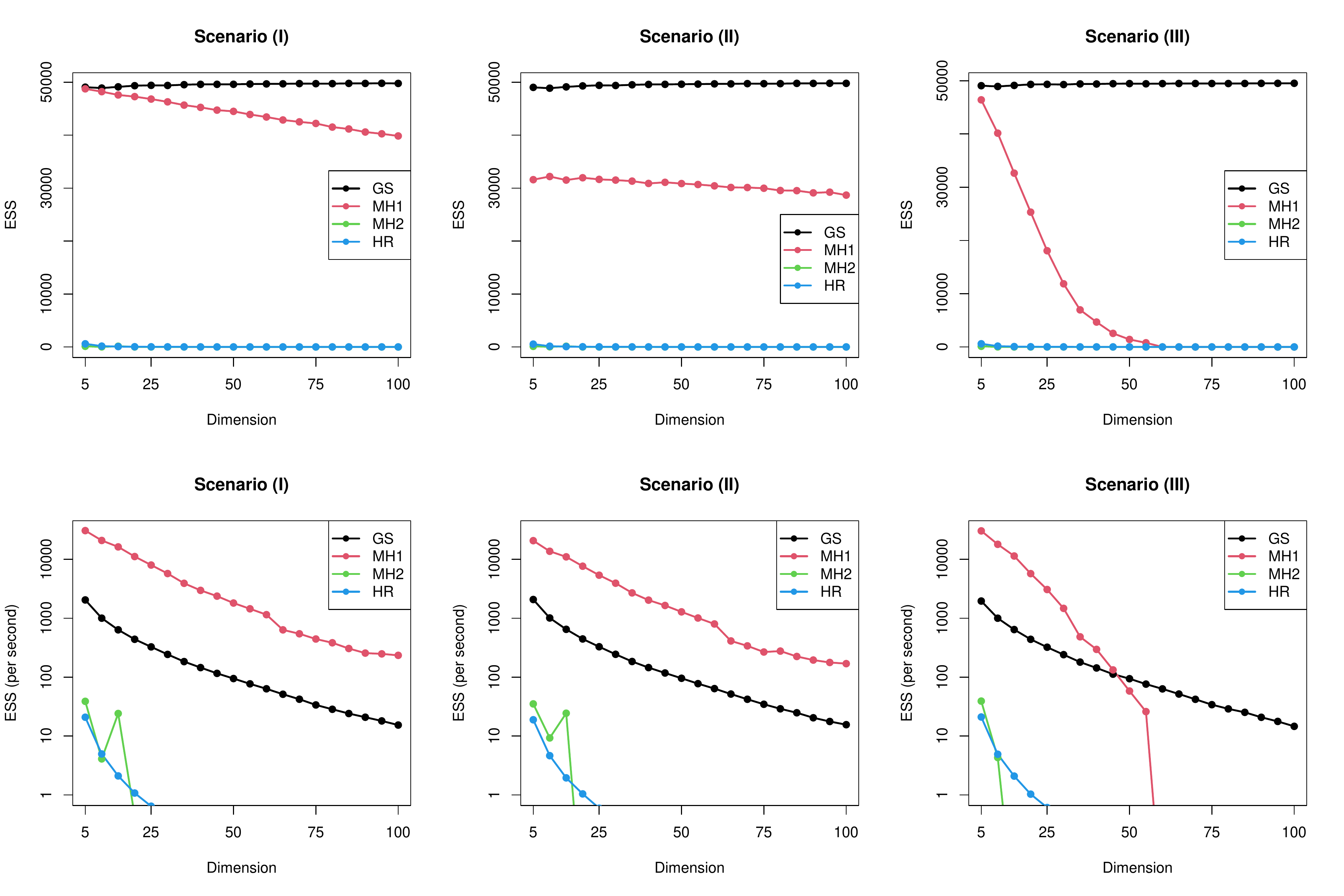}
\caption{Effective sample size (ESS) and ESS per second of the four samplers when $\la = 10$.}
\label{fig:generation10}
\end{figure}

\section{Detailed MCMC algorithm for the matrix skew-$t$ distribution in Section~\ref{sec:num-MST}}

The full conditional distributions of $\W_i$, $\M$, $\B$ and $\bPsi$ are as follows:

\begin{itemize}
\item[-] 
The full conditional distribution of the latent matrix $\W_i$ is proportional to 
$$
|\W_i|^{-(\nu+p+q+1)/2}\exp\left\{-\frac12{\rm tr}(\tilde{\bPhi}_i\W_i+\tilde{\bGa}_i\W_i^{-1})\right\},
$$
where 
$$
\tilde{\bPhi}_i=\B\bOm^{-1}\B^\top, \ \ \ \ 
\tilde{\bGa}_i=\bPsi + (\Y_i-\M)\bOm^{-1}(\Y_i-\M)^\top.
$$
Note that ${\rm rank}(\tilde{\bPhi}_i)=\min(p,q)$ and ${\rm rank}(\tilde{\bGa}_i)=p$ when $\bPsi$ is positive definite. 
Hence, the full conditional of $\W_i$ is ${\rm MGIG}_p(-(\nu+p+q+1)/2, \tilde{\bPhi}_i, \tilde{\bGa}_i)$.

\item[-]
The full conditional of ${\rm vec}(\M)$ is $N_{pq}( \Dbt_M\dbt_M, \Dbt_M)$, where 
\begin{align*}
&\Dbt_M=\left\{\bOm^{-1}\otimes \left(\sum_{i=1}^n\W_i^{-1}\right) + \V_{0M}^{-1}\otimes \U_{0M}^{-1}\right\}^{-1},\\
&\dbt_M=\sum_{i=1}^n(\bOm^{-1}\otimes \W_i^{-1}){\rm vec}(\Y_i-\W_i\B)+(\V_{0M}^{-1}\otimes \U_{0M}^{-1}){\rm vec}(\A_{0M}).
\end{align*}

\item[-]
The full conditional of ${\rm vec}(\B)$ is $N_{pq}(\Dbt_B\dbt_B, \Dbt_B)$, where 
\begin{align*}
&\Dbt_B=\left\{\bOm^{-1}\otimes \left(\sum_{i=1}^n\W_i\right) + \V_{0B}^{-1}\otimes \U_{0B}^{-1}\right\}^{-1},\\
&\dbt_B=\sum_{i=1}^n(\bOm^{-1}\otimes \I_p){\rm vec}(\Y_i-\M)+(\V_{0B}^{-1}\otimes \U_{0B}^{-1}){\rm vec}(\A_{0B}).
\end{align*}

\item[-]
The full conditional of $\bPsi$ is ${\rm W}_p((\sum_{i=1}^n\W_i^{-1}+\bPsi_0^{-1})^{-1}, \eta_0+n\nu)$.

\item[-]
The full conditional of $\bOm$ is ${\rm IW}_p(\bOm_0+\sum_{i=1}^n(\Y_i-\M-\W_i\B)^\top\W_i^{-1}(\Y_i-\M-\W_i\B), \xi_0+np)$.
\end{itemize}

\end{document}